\newtheorem{theorem}{Theorem}
\newtheorem{conjecture}{Conjecture}
\newtheorem{corollary}{Corollary}
\newtheorem{lemma}{Lemma}
\newtheorem{proposition}{Proposition}
\theoremstyle{definition}
\newtheorem{definition}{Definition}
\theoremstyle{plain}
\newcommand{\calF}{\mathcal{F}}
\newcommand{\calH}{\mathcal{H}}
\newcommand{\calI}{\mathcal{I}}
\newcommand{\calJ}{\mathcal{J}}
\newcommand{\calP}{\mathcal{P}}
\newcommand{\calS}{\mathcal{S}}
\newcommand{\calB}{\mathcal{B}}
\newcommand{\calD}{\mathcal{D}}
\newcommand{\calDIW}{\mathcal{D}_{\mathcal{I}}^{W}}
\newcommand{\dtv}{\mathrm{d_{TV}}}
\newcommand{\E}{\mathop{\mathbf{E}}}
\newcommand{\eps}{\epsilon}
\newcommand{\Inf}{\mathrm{Inf}}
\newcommand{\ignore}[1]{}
\newcommand{\poly}{\mathrm{poly}}
\newcommand{\SymInf}{\mathrm{SymInf}}
\newcommand{\Var}{\mathop{\mathbf{Var}}}
\newcommand{\core}{\mathrm{core}}
\newcommand{\dist}{\mathrm{dist}}
\newcommand{\layer}[3]{L^{#1}_{\overline{#2}\leftarrow #3}}
\title{Partially Symmetric Functions are Efficiently Isomorphism-Testable}
\author{Eric Blais\footnote{
      School of Computer Science, 
      Carnegie Mellon University,
      Pittsburgh, PA, USA.
      Email: \texttt{eblais@cs.cmu.edu}}
  \and
  Amit Weinstein\footnote{
      Blavatnik School of Computer Science,
      Tel Aviv University,
      Tel Aviv, Israel.
      Email: \texttt{amitw@tau.ac.il}.
      Research supported in part by an ERC Advanced grant.}
  \and 
  Yuichi Yoshida\footnote{
      School of Informatics, Kyoto University and
      Preferred Infrastructure, Inc.,
      Kyoto, Japan.
      Email: \texttt{yyoshida@kuis.kyoto-u.ac.jp}}
}
\begin{document}
\maketitle

\begin{abstract}
Given a function $f : \{0,1\}^n \to \{0,1\}$, the $f$-\emph{isomorphism testing} problem requires a
randomized algorithm to distinguish functions that are identical to $f$ up to relabeling
of the input variables from functions that are far from being so. An important open question in property
testing is to determine for which functions $f$ we can test $f$-isomorphism with a constant 
number of queries. Despite much recent attention to this question, essentially only two classes of
functions were known to be efficiently isomorphism testable: symmetric functions and juntas.  

We unify and extend these results by showing that all \emph{partially symmetric} functions---functions invariant to the reordering of all but a constant number of their variables---are efficiently isomorphism-testable. This class of functions, first introduced by Shannon, includes symmetric functions, juntas, and many other functions as well.  We conjecture that these functions are essentially the only functions efficiently isomorphism-testable.

To prove our main result, we also show that partial symmetry is efficiently testable. In turn, to prove
this result we had to revisit the junta testing problem.  We provide a new proof of correctness of the
nearly-optimal junta tester. Our new proof replaces the Fourier machinery of the original
proof with a purely combinatorial argument that exploits the connection between sets of variables
with low influence and intersecting families.

Another important ingredient in our proofs is a new notion of 
\emph{symmetric influence}.  We use this measure of influence to prove that partial symmetry is
efficiently testable and also to construct an efficient sample extractor for
partially symmetric functions.  We then combine the sample extractor with the testing-by-implicit-learning
approach to complete the proof that partially symmetric functions are efficiently isomorphism-testable.
\end{abstract}

\setcounter{page}{0}
\thispagestyle{empty}
\newpage

\section{Introduction}

Property testing considers the following general problem: given a property $\calP$, identify the minimum number of queries required to determine with high probability whether an input has the property $\calP$ or whether it is ÒfarÓ from $\calP$. This question was first formalized by Rubinfeld and Sudan~\cite{rubinfeld1996robust}.

\begin{definition}[\cite{rubinfeld1996robust}]
Let $\calP$ be a set of Boolean functions. An $\eps$-\emph{tester} for $\calP$ is a randomized algorithm which queries an unknown function $f: \{0,1\}^n\to \{0,1\}$ on a small number of inputs and
\vspace{-20pt}
 \begin{enumerate}[(i)]
   \setlength{\itemsep}{0pt}
   \item Accepts with probability at least $2/3$ when $f \in \calP$;
   \item Rejects with probability at least $2/3$ when $f$ is $\eps$-far from $\calP$,
 \end{enumerate}
 \vspace{-5pt}
where $f$ is $\eps$-\emph{far} from $\calP$ if $\dist(f,g) := | \{ x \in \{0,1\}^n \mid f(x) \neq g(x) \} | \geq \eps2^n$ holds for every $g \in \calP$.
\end{definition} 

Goldreich, Goldwasser, and Ron~\cite{goldreich1998property} extended the scope of this definition to graphs and other combinatorial objects. 
Since then, the field of property testing has been very
active.  For an overview of recent developments, we refer the reader to the 
surveys~\cite{ron2010algorithmic,rubinfeld2011survey} and the book~\cite{goldreich2010book}.

A notable achievement in the field of property testing is the complete characterization of graph properties that are testable with a constant number of queries~\cite{alon2009combinatorial}. An ambitious open problem is obtaining a similar characterization for properties of Boolean functions. Recently there has been a lot of progress on the restriction of this question to properties that are closed under linear or affine transformations~\cite{bhattacharyya2010unified,kaufman2008algebraic}. More generally, one might hope to settle this open problem for all properties of Boolean functions that are closed under relabeling of the input variables.

An important sub-problem of this open question is function isomorphism testing. Given a Boolean function $f$, the $f$-\emph{isomorphism testing} problem is to determine whether a function $g$ is isomorphic to $f$---that is, whether it is the same up to relabeling of the input variables---or \emph{far} from being so. A natural goal, and the focus of this paper, is to characterize the set of functions for which isomorphism testing can be done with a constant number of queries.

\paragraph{Previous work.}
The function isomorphism testing problem was first raised by Fischer et al.~\cite{fischer2004testing}. They observed that fully symmetric functions are trivially isomorphism testable with a constant number of queries. They also showed that every $k$-\emph{junta}, that is every function which depends on at most $k$ of the input variables, is isomorphism testable with $\poly(k)$ queries. This bound was recently improved by Chakraborty et al.~\cite{chakraborty2011nearly}, who showed that $O(k\log k)$ suffice. In particular, these results imply that juntas on a constant number of variables are isomorphism testable with a constant number of queries.

The first lower bound for isomorphism testing was also provided by Fischer et al.~\cite{fischer2004testing}. They showed that for small enough values of $k$, testing isomorphism to a $k$-linear function (i.e., a function that returns the parity of $k$ variables) requires $\Omega(\log k)$ queries.\footnote{More precisely, they showed that non-adaptive testers require $\tilde\Omega(\sqrt{k})$ queries. Here and in the rest of this section, tilde notation is used to hide logarithmic factors.} Following a series of recent works~\cite{goldreich2010testing,blais2011property,blais2011linear}, the exact query complexity for testing isomorphism to $k$-linear functions has been determined to be $\tilde\Theta(\min(k,n-k))$.

More general lower bounds for isomorphism testing were obtained by O'Donnell and the first author~\cite{blais2010lower}. In particular, they showed that testing isomorphism to \emph{any} $k$-junta that is \emph{far} from being a $(k-1)$-junta requires $\Omega(\log \log k)$ queries. This lower bound gives a large family of functions for which testing isomorphism requires a super-constant number of queries. Alon et al. have shown that in fact the query complexity of testing isomorphism is $\tilde\Theta(n)$ for almost every function~\cite{alon2011nearly} (see also~\cite{alon2010testing,chakraborty2011nearly}). 

\paragraph{Partially symmetric functions.}
As seen above, the only functions which we know are isomorphism testable with a constant number of queries are fully symmetric functions and juntas. Our motivation for the current work was to see if we can unify and generalize the results to encompass a larger class of functions. While symmetric functions and juntas may seem unrelated, there is in fact a strong connection. Symmetric functions, of course, are invariant under any relabeling of the input variables. Juntas satisfy a similar but slightly weaker invariance property. For every $k$-junta, there is a set of at least $n-k$ variables such that the function is invariant to any relabeling of these variables. Functions that satisfy this condition are called \emph{partially symmetric}.

\begin{definition}[Partially symmetric functions]
  For a subset $J \subseteq [n] := \{1,\ldots,n\}$, a function $f: \{0,1\}^n \to \{0,1\}$ is $J$-\emph{symmetric} if permuting the labels of the variables of $J$ does not change the function.
  Moreover, $f$ is called $t$-\emph{symmetric} if there exists $J \subseteq [n]$ of size at least $t$ such that $f$ is $J$-symmetric.
\end{definition}
Shannon first introduced partially symmetric functions as part of his investigation on the circuit 
complexity of Boolean functions~\cite{shannon1949synthesis}. He showed
that while most functions require an exponential number of gates to compute, every partially symmetric function can be implemented much more efficiently. Research on the role of partial symmetry in the 
complexity of implementing functions in circuits, binary decision diagrams, and 
other models has remained active ever since~\cite{das1971detecting,meinel1998book}. 
 Our results suggest that studying 
partially symmetric functions may also yield greater understanding of property testing
on Boolean functions.

\paragraph{Our results.}
The set of partially symmetric functions includes both juntas and symmetric functions, but the set also contains many other functions as well. A natural question is whether this entire class of functions is isomorphism testable with a constant number of queries. Our first main result gives an affirmative answer to this question.

\begin{theorem}\label{thm:psf-iso-test}
For every $(n-k)$-symmetric function $f:\{0,1\}^n\to\{0,1\}$ there exists an $\eps$-tester for $f$-isomorphism that performs $O(k \log k/\eps^2)$ queries.
\end{theorem}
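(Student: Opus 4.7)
The plan is to reduce $f$-isomorphism testing to the two ingredients advertised in the abstract: a tester for partial symmetry, and a sample extractor for partially symmetric functions that, combined with the testing-by-implicit-learning approach, lets us compare $g$ with $f$ up to a relabelling of coordinates. Throughout, let $J \subseteq [n]$ be a maximal symmetric set of $f$, so $|J^c| \le k$, and observe that $f$ is completely described by the mapping $(y, w) \mapsto f(y, w)$, where $y \in \{0,1\}^{J^c}$ is the assignment to the asymmetric variables and $w \in \{0,\ldots,|J|\}$ is the Hamming weight on the symmetric part. Isomorphism between $g$ and $f$ is then equivalent to the existence of a bijection between the asymmetric variables of $g$ and $J^c$ under which these two ``profiles'' coincide.

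First, I would run the partial-symmetry tester on $g$ with accuracy parameter a constant fraction of $\eps$, rejecting on failure. If it accepts, $g$ is close to some $(n-k)$-symmetric function $\tilde g$, and I would identify an explicit candidate set $I \subseteq [n]$ of at most $k$ coordinates that captures the asymmetric variables of $\tilde g$; concretely, I would keep those variables with non-negligible symmetric influence, estimated via the combinatorial junta-testing argument re-proved in this paper. The same machinery that certifies partial symmetry thus localises the asymmetric support within the $O(k \log k / \eps^2)$ query budget.

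Next, I would invoke the sample extractor on $g$ relative to $I$ to produce i.i.d.\ samples of the form $(x_I, |x_{[n]\setminus I}|, g(x))$ for uniformly random $x$; $\poly(k, 1/\eps)$ such samples suffice to empirically reconstruct the profile of $g$ to within total-variation distance $O(\eps)$. To finish, I would compare the empirical profile of $g$ with the known profile of $f$ over all $k!$ bijections $\pi : I \to J^c$, accepting iff some $\pi$ makes the two profiles close. Completeness holds because when $g \cong f$, the extracted profile converges to that of $f$ under the true matching bijection; soundness holds because if $g$ is $\eps$-far from every isomorphic copy of $f$, then every $\pi$ must leave a constant-fraction discrepancy between the two profiles, which the empirical estimate detects.

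The main obstacle I foresee is the delicate calibration of errors between the three stages. The set $I$ is only approximately correct---a handful of variables of low but nonzero symmetric influence may be misplaced---so every extracted sample is slightly biased, and the sample extractor itself is only an approximation to truly uniform samples. These biases have to be controlled tightly enough that the final profile comparison still distinguishes the yes- and no-cases with constant probability, while the total query cost across all three stages stays within $O(k \log k / \eps^2)$. Matching the resolution of the symmetric-influence estimates to the resolution required for the profile comparison, and in particular avoiding an extra $\poly(k)$ loss in this coupling, is where I expect the technically interesting work to live.
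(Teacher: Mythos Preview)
Your three-stage outline---test partial symmetry, localise the asymmetric support, sample the core and compare against all $k!$ relabellings---is exactly the paper's strategy, and your final counting (Chernoff plus a union bound over $k!$ permutations giving $\Theta(k\log k/\eps^2)$ samples) matches theirs as well.

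The one place where you diverge, and where the paper resolves precisely the obstacle you flag, is in how the asymmetric support is represented. You propose to isolate an explicit set $I$ of at most $k$ \emph{individual coordinates} and then sample relative to $I$; the paper never does this. Instead it keeps the random partition $\calI$ produced by the partial-symmetry tester and works only with the $k$ \emph{parts} that were flagged. The sampler draws $x$ from a distribution $\calDIW$ that forces every non-workspace part to be constant (all zeros or all ones), so a single query simultaneously fixes a bit for each part, and the ``core input'' is read off from the part values rather than from individual coordinates. The error analysis then goes through a surrogate function $h$ that is $(n-k)$-symmetric with its asymmetric variables separated by $\calI$ and contained in the flagged parts; weak sub-additivity of symmetric influence bounds $\dist(g,h)$, and Proposition~\ref{prop:samples} shows the induced marginal on the core is $O(1/k)$-close to the ideal distribution $\calD^*_{k,n}$. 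This partition-level bookkeeping is exactly what sidesteps the ``misplaced low-influence variable'' calibration problem you anticipated: no coordinate is ever singled out, so there is nothing to misplace. Your proposal would work if you replaced ``identify $k$ coordinates'' with ``retain the $k$ flagged parts and sample constantly on parts''; as written, pinning down $k$ specific variables within the stated budget is the gap.
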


A simple modification of an argument in Alon et al.~\cite{alon2011nearly} can be used to show that the bound in the above theorem is tight up to logarithmic factors. Indeed by this argument, testing isomorphism to almost every $(n-k)$-symmetric function requires $\Omega(k)$ queries.

We believe that the theorem might also be best possible in a different way. That is, we conjecture that the set of partially symmetric functions is essentially the set of functions for which testing isomorphism can be done with a constant number of queries. We discuss this conjecture with some supporting evidence in Section~\ref{sec:discussion}. 

The proof of our first main theorem follows the general outline of the proof that isomorphism testing to juntas can be done in a constant number of queries. The observation which allows us to make this connection is the fact that partially symmetric functions can be viewed as junta-like functions. More precisely, an $(n-k)$-symmetric function is a function that has $k$ special variables where for each assignment for these variables, the restricted function is fully symmetric on the remaining $n-k$ variables.

The proof for testing isomorphism of juntas has two main components. The first is an efficient junta testing algorithm. This enables us to reject functions that are far from being juntas. The second is a query efficient sampler of the ``core'' of the input function given that the function is close to a junta. The sampler can then be used in order to verify if the two juntas are indeed isomorphic. We generalize both of these components for partially symmetric functions.

\bigskip
Our second main result, and the first component of the isomorphism tester, is an efficient algorithm for testing partial symmetry.

\begin{theorem}\label{thm:psf-test}
The property of being $(n-k)$-symmetric for $k<n/10$ is testable with $O(\tfrac{k}{\eps} \log \tfrac{k}{\eps})$ queries.
\end{theorem}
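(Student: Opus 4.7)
The plan is to adapt the partition-based framework for junta testing, replacing standard influence with the notion of \emph{symmetric influence} featured in the abstract. For a set $S \subseteq [n]$, define the symmetric influence $\SymInf(S)$ to be, roughly, the probability that applying a uniformly random transposition of two coordinates in $S$ to a uniformly random input changes the value of $f$. An $(n-k)$-symmetric function with symmetric set $J$ has zero symmetric influence on $J$, and the tester will rest on a robust converse: any function with $\SymInf(J) \le \delta$ on a set $J$ of size at least $n-k$ is $O(\delta)$-close to some $J$-symmetric function $g$, obtained by taking the majority value of $f$ on each orbit of the symmetric group acting on $J$.

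Given this machinery, the tester randomly partitions $[n]$ into $r = \Theta(k/\eps)$ blocks $B_1,\dots,B_r$ and tries to flag every block that ``breaks symmetry''. In each of $T = O((k/\eps)\log(k/\eps))$ rounds it picks a random input $x$ together with a random transposition $\tau$ whose support lies inside the union of the currently unflagged blocks; if $f(x) \neq f(\tau(x))$, it binary-searches, exactly as in the classical junta tester, to pin the disagreement down to a single block and flags it. Reject once more than $k$ distinct blocks have been flagged, and otherwise accept. Completeness is immediate: if $f$ is $J$-symmetric with $|J|\ge n-k$, then at most $k$ blocks intersect $[n]\setminus J$, and any transposition supported in purely-$J$ blocks leaves $f$ invariant, so at most $k$ blocks can ever be flagged.

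The hard direction is soundness: I need to show that if $f$ is $\eps$-far from $(n-k)$-symmetric, then with high probability more than $k$ blocks get flagged. The key reduction is that any set $U$ of at most $k$ blocks whose complement $V$ has symmetric influence at most $O(\eps)$ would yield, via the robust converse, a $V$-symmetric function within $\eps$ of $f$, contradicting the ``far'' assumption; hence the complement of the flagged blocks must carry non-negligible symmetric influence, and a standard analysis of the binary-search probes then ensures each iteration flags a new block with sufficient probability. The main obstacle is the robust converse itself---a Poincar\'e-style inequality bounding the Hamming distance from $f$ to its closest $J$-symmetric approximant by $O(\SymInf(J))$. I would attack this through a coupling/mixing argument for the transposition random walk on the symmetric group on $J$, combined with the combinatorial intersecting-family viewpoint that underlies the new junta-testing proof announced in the abstract; the latter is also what buys the near-linear query bound $O((k/\eps) \log (k/\eps))$ instead of a polynomial dependence on $k/\eps$.
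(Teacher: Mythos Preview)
Your outline has the right shape, but two load-bearing steps do not go through as stated.

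\textbf{The soundness reduction has a hole.} You argue that if the complement $V$ of at most $k$ flagged blocks has $\SymInf(V)\le O(\eps)$, then the robust converse produces a $V$-symmetric function $O(\eps)$-close to $f$, contradicting $\eps$-farness from $(n-k)$-symmetric. But $V$ is the complement of a union of $k$ blocks of size $\approx n/r$ each, so $|V|\approx n-kn/r$, not $n-k$; being $(n-kn/r)$-symmetric is a much weaker property and yields no contradiction. Closing this gap is exactly the content of Lemma~\ref{lem:psf-main} in the paper: with high probability over a random partition into $r=\Theta(k^2/\eps^2)$ parts (not $\Theta(k/\eps)$), \emph{every} union $J$ of $k$ parts satisfies $\SymInf_f(\overline J)\ge\eps/9$ whenever $f$ is $\eps$-far from $(n-k)$-symmetric. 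The proof uses weak sub-additivity of symmetric influence (Lemma~\ref{lem:syminf-weak-subadditivity}) to show that the family of ``bad'' small sets $J$ is $(k+1)$-intersecting, and then applies the Dinur--Safra--Friedgut bound $\mu_p(\calF)\le p^t$ for $t$-intersecting families. This is where the intersecting-family machinery actually enters---and it forces $p=k/r<1/(2k+1)$, hence $r=\Theta(k^2)$ at minimum---rather than being a device to sharpen the query count.

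\textbf{Localization and the definition of $\SymInf$.} With a single transposition $\tau=(i\ j)$ there is nothing to binary-search; and if $f$ is $J$-symmetric with $i\in J$, $j\notin J$, you cannot tell which coordinate is asymmetric, so you cannot guarantee you flag only blocks meeting $[n]\setminus J$. The paper defines $\SymInf_f(S)$ via a uniformly random \emph{full} permutation $\pi\in\calS_S$, and to localize it designates one random part $W$ as a workspace: a sequence $x=x^0,\ldots,x^t=\pi x$ is built in which consecutive terms differ only by a permutation inside some part $I$ together with $W$, so Hamming weight is preserved (the junta-style hybrid $x_Jy_{\overline J}$ would not preserve it). Binary search on this sequence isolates an $I$ with an asymmetric variable in $I\cup W$, and conditioning on $W$ containing no asymmetric variable (probability $\ge 1-k/r$) pins it to $I$. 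Note also that the transposition-based $\SymInf$ does not satisfy the robust converse you rely on: for $f(x)=x_1$ on $S=[n]$ a random transposition touches coordinate $1$ with probability $O(1/n)$, so transposition-$\SymInf$ is $O(1/n)$ while $f$ is $\Omega(1)$-far from symmetric. With the full-permutation definition, the inequality $\dist(f,f_J)\le\SymInf_f(J)\le 2\dist(f,f_J)$ is a direct layer-by-layer computation (Lemma~\ref{lem:distance-to-psf}); the hard ingredients are weak sub-additivity and the intersecting-family step, not the Poincar\'e-type bound you flag as the main obstacle.
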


The natural approach for proving this theorem is to try generalize the result on junta testing in~\cite{blais2009testing}. That result heavily relied on the notion of influence of variables. The \emph{influence} of a set $S$ of variables in a function $f$ is the probability that $f(x) \neq f(y)$ when $x$ is chosen uniformly at random and $y$ is obtained from $x$ by re-randomizing the values of $x_i$ for each $i \in S$. The notion of influence characterizes juntas: when $f$ is a $k$-junta, there is a set of size $n-k$ whose influence is $0$, whereas when $f$ is $\eps$-far from being a $k$-junta, every set of size $n-k$ has influence at least $\eps$.

We introduce a different notion of influence which we call \emph{symmetric influence}. The symmetric influence of a set $S$ of variables in $f$ is the probability that $f(x) \neq f(y)$ when $x$ is chosen uniformly at random and $y$ is obtained from $x$ by permuting the values of $\{x_i\}_{i\in S}$. This notion characterizes partially symmetric functions and satisfies several other useful properties. We provide the details in Section~\ref{sec:syminf}.

The proof of junta testing also relies on nice properties of the Fourier representation of the notion of influence. While symmetric influence has a clean Fourier representation, unfortunately it does not have the properties needed to carry over the proof in~\cite{blais2009testing} to the setting of partially symmetric functions. Instead, we must come up with a new proof technique.

Our proof of Theorem~\ref{thm:psf-test} uses a new connection to intersecting families. A family $\calF$ of subsets of $[n]$ is $t$-\emph{intersecting} if for every pair of sets $S,T \in \calF$, their intersection size is at least $|S \cap T| \geq t$.
This notion was introduced by Erd\H{o}s, Ko, and Rado and a sequence of works led to the complete characterization of the maximum size of $t$-intersecting families that contain sets of fixed size~\cite{erdos1961intersection,frankl1976erdos,wilson1984exact,ahlswede1997complete}. Dinur, Safra, and Friedgut recently extended those results to give bounds on the biased measure of intersecting families~\cite{dinur2005hardness, friedgut2008measure}.

Using results in intersecting families, we obtain a new and improved proof for the main lemma at the heart of the junta testing result~\cite{blais2009testing}. We describe the new proof and the connection to intersecting families in Section~\ref{sec:juntas}. Most importantly, the same technique can also be extended to complete the proof of Theorem~\ref{thm:psf-test}. We present this proof in Section~\ref{sec:psf}.

\bigskip
The second and final component of the isomorphism test for partially symmetric functions is an efficient way to sample
the core of such functions. An $(n-k)$-symmetric function $f$, which is symmetric over a set $J \subseteq [n]$ of size $|J| = n-k$,
has a concise representation as a function $f_{\core}:\{0,1\}^k \times \{0, 1,\ldots, n-k\} \to \{0,1\}$ which we call the \emph{core} of $f$.
The core is the restriction of $f$ to the variables in $\overline{J}$ (in the natural order),
with the additional Hamming weight of the variables in $J$.
To determine if two partially symmetric functions are isomorphic, it suffices to determine whether their
cores are isomorphic. We do so with the help of an efficient sample extractor.

\begin{definition}
A (1 query) $\delta$-\emph{sampler} for the $(n-k)$-symmetric function $f : \{0,1\}^n \to \{0,1\}$ 
is a randomized algorithm that queries $f$ on a single input and
returns a triplet $(x,w,z) \in \{0,1\}^k \times \{0,1,\ldots,n-k\} \times \{0,1\}$ where
\begin{itemize}
\setlength{\itemsep}{0pt}
\item The distribution of $(x,w)$ is $\delta$-close, in total variation distance, to $x$ being uniform over $\{0,1\}^k$
  and $w$ being binomial over $\{0,1,\ldots,n-k\}$ independently, and
\item $z = f_{\core}(x,w)$ with probability at least $1-\delta$.
\end{itemize}
\end{definition}

Our third main result is that for any $(n-k)$-symmetric function $f$, there is a query-efficient algorithm 
for constructing a $\delta$-sampler for $f$.

\begin{theorem}\label{thm:sampler}
Let $f : \{0,1\}^n \to \{0,1\}$ be $(n-k)$-symmetric with $k<n/10$. There is an algorithm that queries $f$
on $O(\tfrac{k}{\eta\delta} \log \tfrac {k}{\eta\delta})$ inputs and with probability at least $1- \eta$ outputs 
a $\delta$-sampler for $f$.
\end{theorem}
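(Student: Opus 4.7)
The plan is to apply the testing-by-implicit-learning paradigm: spend essentially the entire query budget on a preprocessing phase that identifies the (at most $k$) coordinates on which $f$ behaves asymmetrically, and then use those coordinates to translate one query to $f$ into one effective query to $f_{\core}$. Let $V\subseteq[n]$ denote the $k$-element set of asymmetric variables, so that $f$ is $J$-symmetric for $J=[n]\setminus V$. First I would randomly partition $[n]$ into $s=\Theta(k^2/(\eta\delta))$ blocks of nearly equal size. A birthday-style calculation shows that with probability at least $1-\eta/3$ every variable of $V$ falls into a distinct block, so after this partition at most $k$ blocks contain an asymmetric variable and each such block contains exactly one.

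Using the symmetric-influence machinery developed for Theorem~\ref{thm:psf-test}, the next step identifies the blocks that contain a variable of $V$. The structural fact to exploit is that a block $B$ contains some $v\in V$ if and only if its symmetric influence in $f$ is nonzero, and when it is nonzero one can lower-bound it; this will let us pinpoint each asymmetric block by a binary search in the spirit of the junta tester discussed in Section~\ref{sec:juntas}. This step should run in $O(\tfrac{k}{\eta\delta}\log\tfrac{k}{\eta\delta})$ queries and recover the asymmetric blocks $B_{i_1},\ldots,B_{i_k}$ with probability at least $1-\eta/3$.

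Once the asymmetric blocks are known, the sampler operates as follows. Draw $x\in\{0,1\}^k$ uniformly and $w$ binomially on $\{0,\ldots,n-k\}$, then construct $y\in\{0,1\}^n$ by setting every coordinate of $B_{i_j}$ to $x_j$---so the unknown position $v_j\in B_{i_j}$ automatically receives the value $x_j$---and on the remaining blocks planting a uniformly random string of Hamming weight $w-\sum_j x_j(|B_{i_j}|-1)$. Query $f(y)$ and output $(x,w,f(y))$. Because $f$ is $J$-symmetric and the construction enforces $y_{v_j}=x_j$ for every $j$ while making the Hamming weight of $y$ on $J$ equal to $w$, we obtain $f(y)=f_{\core}(x,w)$ whenever the weight adjustment is feasible. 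The marginal on $(x,w)$ is exactly uniform $\times$ binomial, and standard Chernoff bounds on $w$ combined with $\sum_j|B_{i_j}|=O(n/k)$ show that the adjustment is infeasible (and hence $z$ may be wrong) with probability at most $\delta$.

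The hard part will be the block-identification step. We have to produce a correct list of \emph{all} asymmetric blocks rather than merely detect their presence, and we must keep the overall failure probability below $\eta$ even as $k$ and $1/\delta$ grow. The symmetric-influence characterization together with the intersecting-family lower bounds underlying the proof of Theorem~\ref{thm:psf-test} is what I expect to make the binary search provably correct while keeping the query count at the advertised $O(\tfrac{k}{\eta\delta}\log\tfrac{k}{\eta\delta})$.
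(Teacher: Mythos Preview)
Your high-level plan---partition, locate the asymmetric coordinates, then sample by making each identified block constant---is the same as the paper's. The genuine gap is in the identification step. You assert that when a block $B$ contains some $v\in V$ its symmetric influence is nonzero ``and when it is nonzero one can lower-bound it.'' The first clause is correct, but no useful lower bound exists: an asymmetric variable can affect $f$ on an arbitrarily small fraction of inputs. For instance $f(x)=x_1\cdot\mathbf{1}[x_2=\cdots=x_n=0]$ is $(n-1)$-symmetric with $V=\{1\}$, yet $\SymInf_f(\{1,j\})=2^{1-n}$ for every $j\ge 2$. With a budget of $O(\tfrac{k}{\eta\delta}\log\tfrac{k}{\eta\delta})$ queries you therefore cannot guarantee detection of every asymmetric block, and if even one block $B_{i_j}$ is missed your construction of $y$ no longer forces $y_{v_j}=x_j$, so the conclusion $f(y)=f_{\core}(x,w)$ can fail with probability far exceeding $\delta$.

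The paper circumvents this by never insisting on finding \emph{all} of $V$. It runs \textsc{Partially-Symmetric-Test} with accuracy parameter $\Theta(\eta\delta)$; since $f$ is $(n-k)$-symmetric the test accepts, and the argument behind Lemma~\ref{lem:psf-close} shows that with probability at least $1-\eta$ the union $J$ of identified parts satisfies $\SymInf_f(\overline{J})=O(\eta\delta)$. This does \emph{not} imply $V\subseteq J$, but via Lemma~\ref{lem:distance-to-psf} and the weak sub-additivity of symmetric influence it yields an $(n-k)$-symmetric function $h$ that is $O(\eta\delta)$-close to $f$ and whose asymmetric variables all lie in $J$, one per identified part. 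The sampler then queries $f$ at a point drawn from $\calDIW$ (which, like your construction, makes every non-workspace part constant); the returned value equals $h_{\core}$ at the induced $(x,w)$ exactly, and hence equals $f_{\core}(x,w)$ except with probability $O(\delta)$. The missing idea in your proposal, then, is to replace ``locate every asymmetric variable'' with ``drive $\SymInf_f(\overline{J})$ below $\delta$ and pass to a nearby $h$.''
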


This theorem is a generalization of a recent result of Chakraborty et al.~\cite{chakraborty2011efficient}, who gave a similar construction for sampling the core of juntas. Their result has many applications related to testing by implicit learning~\cite{diakonikolas2007testing}. Our result may be of independent interest for similar such applications. We elaborate on this topic and present the proof of Theorem~\ref{thm:sampler} in Section~\ref{sec:psf-iso-test}.

\section{Intersecting families and testing juntas}
\label{sec:juntas}

We begin by revisiting the problem of junta testing. In this section, we give a new proof of the 
correctness of the $k$-junta tester first introduced in~\cite{blais2009testing}.  
At a high level, the junta tester is quite simple: it partitions the set
of indices into a large enough number of parts, then tries to identify all the 
parts that contain a relevant variable.
If at most $k$ such parts are found, the test accepts; otherwise it rejects.  The algorithm is described in
{\sc Junta-Test}.  (See~\cite{blais2009testing} for more details.)

\begin{algorithm}[tbh]
  \caption{\textsc{Junta-Test}$(f, k, \eps)$}
  \begin{algorithmic}[1]
    \STATE Create a random partition $\calI$ of the set $[n]$ into $r = \Theta(k^2)$ parts, and initialize $J = \emptyset$.
    \FOR{each $i = 1$ to $\Theta(k/\epsilon)$}
    \STATE Sample $x,y \in \{0,1\}^n$ uniformly at random.
    \IF{$f(x) \neq f(x_J y_{\overline{J}})$} \label{line:junta-test-neq}
    \STATE Use binary search to find a set $I \in \calI$ that contains a relevant variable.
    \STATE Set $J := J \cup I$.
    \STATE \textbf{if} $J$ is the union of $> k$ parts \textbf{then} reject.
    \ENDIF
    \ENDFOR
    \STATE Accept.
  \end{algorithmic}
\end{algorithm}

It is clear that the {\sc Junta-Test} always accepts $k$-juntas. The non-trivial part of the analysis 
involves showing that functions that are far from $k$-juntas are rejected by the tester with 
sufficiently high probability.  To do so, we must argue that the inequality in Step~\ref{line:junta-test-neq} is satisfied with
non-negligible probability whenever $f$ is far from $k$-juntas and $J$ is the union of at most $k$ parts.
This is accomplished by considering the {influence} of variables in a function. 

The \emph{influence} of the set $J \subseteq [n]$ of variables in the function $f : \{0,1\}^n \to \{0,1\}$ is
$$
\Inf_f(J) = \Pr_{x,y}[ f(x) \neq f(x_{\overline{J}} y_J) ]\ ,
$$
where $x_{\overline{J}} y_J$ is the vector $z \in \{0,1\}^n$ obtained by setting $z_i = y_i$ for every 
$i \in J$ and $z_i = x_i$ for every $i \in [n] \setminus J$.  By definition, the probability that the inequality in Step~\ref{line:junta-test-neq} is satisfied is exactly $\Inf_f(\overline{J})$.  To complete the analysis of correctness of the 
algorithm, we want to show that when $f$ is $\eps$-far from $k$-juntas with high probability over
the choice of the random partition $\calI$, if $J$ is the union of at most $k$ parts in $\calI$, then 
$\Inf(\overline{J}) \ge \frac{\eps}4$. We do so by exploiting only a couple basic facts about
the notion of influence.

\begin{lemma}[Fischer et al.~\cite{fischer2004testing}]
\label{lem:inf-fkrss}
For every $f : \{0,1\}^n \to \{0,1\}$ and every $J, K \subseteq [n]$,
$$
\Inf_f(J) \le \Inf_f(J \cup K) \le \Inf_f(J) + \Inf_f(K)\ .
$$
Furthermore, if $f$ is $\eps$-far from $k$-juntas and $|J| \le k$, then $\Inf_f(\overline{J}) \ge \eps.$
\end{lemma}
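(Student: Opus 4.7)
The plan has three pieces, one for each inequality in the lemma.

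First, for monotonicity $\Inf_f(J) \le \Inf_f(J \cup K)$, I would reduce to a Jensen's inequality argument. Condition on the coordinates in $\overline{J \cup K}$ taking some fixed value $u$. For each extension $w$ on $K \setminus J$, let $p_{u,w} = \Pr_v[f(u,v,w) = 1]$ where $v$ ranges over $\{0,1\}^J$. Then $\Inf_f(J)$ restricted to this conditioning equals $\E_w[2p_{u,w}(1-p_{u,w})]$, while $\Inf_f(J \cup K)$ restricted equals $2\bar p_u(1-\bar p_u)$ where $\bar p_u = \E_w p_{u,w}$. Concavity of $t \mapsto t(1-t)$ gives the desired inequality, and averaging over $u$ finishes it.

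Second, for subadditivity $\Inf_f(J \cup K) \le \Inf_f(J) + \Inf_f(K)$, I would use a coupling plus triangle inequality. Start with uniform $x$, draw independent uniform $y$ and $w$, and form $x' = x_{\overline{J}} y_J$ and $x'' = x'_{\overline{K}} w_K$. A direct check of marginals shows $(x, x'')$ has the same joint distribution as $(x, x_{\overline{J \cup K}} z_{J \cup K})$ for uniform $z$, so $\Pr[f(x) \ne f(x'')] = \Inf_f(J \cup K)$. Since the event $f(x) \ne f(x'')$ forces either $f(x) \ne f(x')$ or $f(x') \ne f(x'')$, a union bound yields $\Inf_f(J \cup K) \le \Inf_f(J) + \Inf_f(K)$, using that $x'$ is itself uniform so the second term is $\Inf_f(K)$.

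Third, for the ``furthermore'' clause, I would argue the contrapositive: if $\Inf_f(\overline{J}) < \eps$ and $|J| \le k$, build a $k$-junta $g$ within distance $\eps$ of $f$. Define $g$ to depend only on $J$ by plurality vote: $g(a,b) = 1$ iff $p_a := \Pr_{b'}[f(a,b')=1] \ge 1/2$. Then $\dist(f,g) = \E_a \min(p_a, 1-p_a)$, and the elementary inequality $\min(p, 1-p) \le 2p(1-p)$ (valid on $[0,1]$) gives $\dist(f,g) \le \E_a[2p_a(1-p_a)] = \Inf_f(\overline{J}) < \eps$, contradicting $\eps$-farness.

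Of the three pieces, I expect the coupling identification in the subadditivity step to be the part that requires the most care, since one needs to verify that sequentially resampling $J$ and then $K$ produces the same joint distribution with $x$ on $\overline{J \cup K}$ as one-shot resampling of $J \cup K$; the other two pieces reduce to standard facts about $p(1-p)$.
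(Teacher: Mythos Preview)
Your proposal is correct in all three parts. The paper does not actually give a proof of this lemma: it is stated as a cited result from Fischer et al.\ and used as a black box in the analysis of \textsc{Junta-Test}. So there is no in-paper argument to compare against.

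For the record, each of your pieces goes through as written. The monotonicity argument via concavity of $t\mapsto t(1-t)$ is clean; note that the same decomposition (conditioning on layers and using concavity of the variance of a Bernoulli) is exactly what the paper uses later to prove monotonicity of \emph{symmetric} influence in Lemma~\ref{lem:syminf-monotonicity}. Your coupling for subadditivity is also fine: the key identity is that $x''$ agrees with $x$ on $\overline{J\cup K}$, takes the value $y$ on $J\setminus K$, and takes the value $w$ on $K$, and since $y$ and $w$ are independent of $x$ and of each other this makes $x''_{J\cup K}$ uniform and independent of $x$. The ``furthermore'' clause is the standard majority-vote rounding, and your inequality $\min(p,1-p)\le 2p(1-p)$ is exactly what is needed.
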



We also use the fact that the family of sets $J \subseteq [n]$ whose complements have small
influence form an intersecting family. For a fixed $t \ge 1$, a family $\calF$ of subsets of $[n]$ is called
$t$-\emph{intersecting} if any two sets $J$ and $K$ in $\calF$ have intersection size $|J \cap K| \ge t$. 
Much of the work in this area focused on bounding the size of $t$-intersecting families that 
contain only sets of a fixed size.  Dinur and Safra~\cite{dinur2005hardness} considered general
families and asked what the maximum $p$-\emph{biased measure} of such families can be.
For $0 < p < 1$, this measure is defined as $\mu_p(\calF) := \Pr_J[ J \in \calF ]$ where the probability
over $J$ is obtained by including each coordinate $i \in [n]$ in $J$ independently with probability $p$.
They showed that $2$-intersecting families have small $p$-biased measure~\cite{dinur2005hardness} and Friedgut showed how the same result also extends to $t$-intersecting families for $t > 2$~\cite{friedgut2008measure}.

\begin{theorem}[Dinur and Safra~\cite{dinur2005hardness}; Friedgut~\cite{friedgut2008measure}]\label{thm:DS}
Let $\calF$ be a $t$-intersecting family of subsets of $[n]$ for some $t \geq 1$. 
For any $p < \frac1{t+1}$, the $p$-biased measure of $\calF$ is bounded by
$
\mu_p(\calF) \le p^t.
$
\end{theorem}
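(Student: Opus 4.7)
My plan is to follow the spectral / Hoffman-bound approach due to Friedgut. View $\calF$ as an independent set in the weighted Kneser-type graph $G$ on $2^{[n]}$ whose edges are the pairs $\{S,T\}$ with $|S \cap T| < t$, and whose vertices are weighted according to the $p$-biased measure $\mu_p$. I would then construct a suitable self-adjoint ``adjacency operator'' $M$ on $L^2(\{0,1\}^n, \mu_p)$ encoding this graph, and invoke a weighted Hoffman-style bound of the shape
\[
\mu_p(\calF) \;\le\; \frac{-\lambda_{\min}(M)}{\lambda_{\max}(M) - \lambda_{\min}(M)}.
\]

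The key computation is the diagonalization of $M$. Since $M$ commutes with the $S_n$-action permuting coordinates, its eigenspaces split along the $p$-biased Fourier levels $V_0, V_1, \ldots, V_n$, so each eigenvalue $\lambda_j$ admits an explicit formula as a polynomial in $p$, obtained by a direct inclusion-exclusion on the intersection pattern. The tight example $\calF^\star = \{S : \{1,\ldots,t\} \subseteq S\}$, with $\mu_p(\calF^\star) = p^t$, signals both the target bound and which level should carry the minimum eigenvalue, namely $V_t$. Verifying that $-\lambda_t/(\lambda_0 - \lambda_t) = p^t$ (where $\lambda_0 = \lambda_{\max}$ is contributed by the constant functions) then closes the argument.

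The main obstacle is proving that $\lambda_t$ is genuinely $\lambda_{\min}$ precisely in the regime $p < 1/(t+1)$. Above that threshold a different level attains the minimum and the true extremal families become ``Frankl-type'' unions of several stars, matching the Ahlswede-Khachatrian classification; the bound $p^t$ then genuinely fails, which is why the hypothesis $p < 1/(t+1)$ is sharp. An alternative, more combinatorial route I would fall back on is the shifting/compression method: iterated $(i,j)$-shifts preserve the $t$-intersecting property and do not decrease $\mu_p(\calF)$, reducing to left-compressed families that can then be analyzed by induction on $t$ and $n$, extracting the quantitative bound $p^t$ directly.
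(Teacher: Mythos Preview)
The paper does not prove Theorem~\ref{thm:DS}; it is quoted as a black box from Dinur--Safra~\cite{dinur2005hardness} and Friedgut~\cite{friedgut2008measure} and then applied in the proofs of Lemmas~\ref{lem:junta-main} and~\ref{lem:psf-main}. So there is nothing in the paper to compare your proposal against.

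That said, your sketch is a faithful outline of Friedgut's actual argument in~\cite{friedgut2008measure}: build a pseudo-adjacency operator on $L^2(\{0,1\}^n,\mu_p)$ whose ``edges'' are non-$t$-intersecting pairs, diagonalize it level-by-level in the $p$-biased Fourier basis, and plug the resulting $\lambda_0,\lambda_t$ into the weighted Hoffman bound to read off $\mu_p(\calF)\le p^t$. You have correctly identified the delicate step --- showing $\lambda_t=\lambda_{\min}$ exactly when $p<1/(t+1)$ --- and the reason the threshold is sharp (the Ahlswede--Khachatrian/Frankl families take over above it). One small caution on your fallback: $(i,j)$-shifts preserve $\mu_p$ exactly (they are cardinality-preserving), so ``do not decrease'' is fine, but carrying the $p$-biased analysis through purely by compression is less clean than in the uniform-$k$ setting; the standard route there is to approximate $\mu_p$ by the uniform measure on $\binom{[n]}{\lfloor pn\rfloor}$ and invoke the complete intersection theorem, which is heavier machinery than the spectral proof.
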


We are now ready to complete the analysis of \textsc{Junta-Test}.
\begin{lemma}\label{lem:junta-main}
Let $f:\{0,1\}^n \to \{0,1\}$ be a function $\epsilon$-far from $k$-juntas and $\mathcal{I}$ be a random partition of $[n]$ into $r = c \cdot k^2$ parts, for some large enough constant $c$. Then with probability at least $5/6$, $\Inf_f(\overline{J}) \geq \epsilon / 4$ for any union $J$ of $k$ parts from $\calI$.
\end{lemma}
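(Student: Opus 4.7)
The plan is to apply Theorem~\ref{thm:DS} to the family of subsets of $[n]$ whose complement has small influence. Define
\[
  \calF = \bigl\{ J \subseteq [n] : \Inf_f(\overline J) < \eps/4 \bigr\}.
\]
The lemma is equivalent to showing that, with probability at least $5/6$ over $\calI$, no union of $k$ parts belongs to $\calF$. Moreover, by monotonicity of influence (a sub-union has larger complement, and thus at least as large an influence), it suffices to rule out bad unions of exactly $k$ parts rather than at most $k$ parts, so throughout I focus on index sets $A \subseteq [r]$ with $|A| = k$.

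The first step is to show that $\calF$ is a $(k+1)$-intersecting family. For any $J_1, J_2 \in \calF$, the subadditivity part of Lemma~\ref{lem:inf-fkrss} gives
\[
  \Inf_f(\overline{J_1 \cap J_2}) \;=\; \Inf_f(\overline{J_1} \cup \overline{J_2}) \;\leq\; \Inf_f(\overline{J_1}) + \Inf_f(\overline{J_2}) \;<\; \tfrac{\eps}{2} \;<\; \eps,
\]
and the second part of Lemma~\ref{lem:inf-fkrss}, applied in contrapositive, then forces $|J_1 \cap J_2| > k$, i.e.\ $|J_1 \cap J_2| \geq k+1$.

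The second step is probabilistic. For any fixed $A \subseteq [r]$ with $|A| = k$, the random set $\bigcup_{i \in A} I_i$ contains each element of $[n]$ independently with probability $p = k/r$ under a uniformly random partition, so it is distributed exactly as a $p$-biased subset of $\{0,1\}^n$. Choosing $r = ck^2$ with $c$ a sufficiently large constant guarantees $p = 1/(ck) < 1/(k+2)$, so Theorem~\ref{thm:DS} with $t = k+1$ yields
\[
  \Pr_{\calI}\Bigl[\,\bigcup_{i \in A} I_i \in \calF\,\Bigr] \;=\; \mu_p(\calF) \;\leq\; (k/r)^{k+1}.
\]
A union bound over the $\binom{r}{k}$ choices of $A$ then gives the desired bound, provided $c$ is tuned appropriately.

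The main obstacle I anticipate lies in Step~3: making the trade-off $\binom{r}{k}\cdot (k/r)^{k+1} \leq 1/6$ hold at the quadratic scaling $r = \Theta(k^2)$, rather than needing $r$ to grow faster in $k$. This is exactly the regime where the intersecting-family bound of Theorem~\ref{thm:DS} is tight---the exponent $k+1$ coming from $\calF$ being $(k+1)$-intersecting is what compensates for the combinatorial explosion of $\binom{r}{k}$---and it is the place where the entire argument (and the specific $\Theta(k^2)$ number of parts used by \textsc{Junta-Test}) is calibrated.
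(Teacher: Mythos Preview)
Your Step~3 does not go through at $r = \Theta(k^2)$. With $\calF$ only $(k+1)$-intersecting, Theorem~\ref{thm:DS} gives $\mu_{k/r}(\calF) \le (k/r)^{k+1}$, and the union bound becomes
\[
  \binom{r}{k}\Bigl(\frac{k}{r}\Bigr)^{k+1}
  \;\le\;
  \Bigl(\frac{er}{k}\Bigr)^k \Bigl(\frac{k}{r}\Bigr)^{k+1}
  \;=\;
  e^k \cdot \frac{k}{r}
  \;=\;
  \frac{e^k}{ck},
\]
which blows up exponentially in $k$. To get a nontrivial bound from the exponent $k+1$ alone you would need $r \ge \Omega(k e^k)$, not $r = \Theta(k^2)$. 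So the ``calibration'' you describe is off by exactly one factor of $(k/r)^{k-1}$.

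The paper closes this gap by a case split that boosts the intersection parameter. It first observes that the family $\calF_{1/2} = \{J : \Inf_f(\overline{J}) < \eps/2\}$ is $(k+1)$-intersecting (same argument as yours). If $\calF_{1/2}$ contains some set $J_0$ of size $< 2k$, then with high probability the random partition separates all elements of $J_0$; since every member of $\calF_{1/2}$ meets $J_0$ in at least $k+1$ points, no union of only $k$ parts can cover any such member, and we are done. Otherwise every set in $\calF_{1/2}$ has size $\ge 2k$, and then $\calF_{1/4} = \{J : \Inf_f(\overline{J}) < \eps/4\}$ is in fact $2k$-intersecting (if $|J_1 \cap J_2| < 2k$ then $J_1 \cap J_2 \in \calF_{1/2}$ by subadditivity, contradiction). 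Applying Theorem~\ref{thm:DS} with $t = 2k$ now yields
\[
  \binom{r}{k}\Bigl(\frac{k}{r}\Bigr)^{2k}
  \;\le\;
  \Bigl(\frac{ek}{r}\Bigr)^{k}
  \;=\;
  O(k^{-k}),
\]
which is what makes $r = \Theta(k^2)$ sufficient. The missing idea in your outline is precisely this bootstrap from $(k+1)$-intersecting to $2k$-intersecting via the small-set case.
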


\begin{proof}
For $0 \leq t \leq \frac12$, let $\calF_t = \{J \subseteq [n] : \Inf_f(\overline{J}) < t\eps\}$ be the family of all 
sets whose complements have influence at most $t\eps$. For any two sets $J, K \in \calF_{1/2}$, the sub-additivity of influence implies that
$$
\Inf_f(\overline{J \cap K}) = \Inf_f(\overline{J} \cup \overline{K}) \le \Inf_f(\overline{J}) + \Inf_f(\overline{K}) < 2 \cdot \tfrac 12 \eps = \eps\ .
$$
But $f$ is $\eps$-far from $k$-juntas, so every set $S \subseteq [n]$ of size $|S| \le k$ satisfies 
$\Inf_f(\overline{S}) \ge \eps$.  Therefore, $|J \cap K| > k$ and, since this argument applies to
every pair of sets in the family, $\calF_{1/2}$ is a $(k+1)$-intersecting family.

Let us now consider two separate cases: when $\calF_{1/2}$ contains a set of size less than $2k$; and when it does not.  In the first case, let $J \in \calF_{1/2}$ be one of the sets of size $|J| < 2k$.  With high probability, the set $J$ is completely separated by the partition $\calI$. When this event occurs, then for every other set $K \in \calF_{1/2}$, $|J \cap K| \ge k+1$, which means that $K$ is not covered by any union of $k$ parts in $\calI$.  Therefore, with high probability $f$ is $\frac \eps 2$-far (and thus also $\frac \eps 4$-far) from $k$-part juntas with respect to $\calI$, as we wanted to show.

Consider now the case where $\calF_{1/2}$ contains only sets of size at least $2k$.  Then we claim that $\calF_{1/4}$ is a $2k$-intersecting family: otherwise, we could find sets $J, K \in \calF_{1/4}$ such that $|J \cap K| < 2k$ and $\Inf_f(\overline{J \cap K}) \le \Inf_f(\overline{J}) + \Inf_f(\overline{K}) < \frac \eps2$, contradicting our assumption.

Let $J \subseteq [n]$ be the union of $k$ parts in $\mathcal{I}$.  Since $\mathcal{I}$ is a random partition, $J$ is a random subset obtained by including each element of $[n]$ in $J$ independently with probability $p = \frac{k}{r} < \frac1{2k+1}$. By Theorem~\ref{thm:DS},
$$
\Pr_{\mathcal{I}}[ \Inf_f(\overline{J}) < \tfrac \eps 4] = \Pr[ J \in \calF_{1/4} ] = \mu_{k/r}(\calF_{1/4}) \le \left( k / r \right)^{2k}\ .
$$
Applying the union bound over the possible choices of $J$, we get that $f$ is $\frac \eps4$-close to a $k$-part junta with respect to $\calI$ with probability at most
\[
{r \choose k} \left( \frac k r \right)^{2k} \le \left( \frac{er}k \right)^{k} \left( \frac k r \right)^{2k} \le \left(\frac{ek}r\right)^{k} = O(k^{-k})\ . \qedhere
\]
\end{proof}

\section{Symmetric influence}
\label{sec:syminf}

The main focus of this paper is partially symmetric functions, that is, functions invariant under any reordering of the variables of some set $J \subseteq [n]$. Let $\calS_J$ denote the set of permutations of $[n]$ which only move elements from the set $J$. A function $f: \{0,1\}^n \to \{0,1\}$ is $J$-symmetric if $f(x) = f(\pi x)$ for every input $x$ and a permutation $\pi \in \calS_J$, where $\pi x$ is the vector whose $\pi(i)$-th coordinate is $x_i$.

For better analyzing partially symmetric functions, we introduce a new measure named \emph{symmetric influence}. The symmetric influence of a set measures how invariant the function is to reordering of the elements in that set.

\begin{definition}[Symmetric influence]
The \emph{symmetric influence} of a set $J \subseteq [n]$ of variables in a Boolean function $f : \{0,1\}^n \to \{0,1\}$ is defined as
$$
\SymInf_f(J) = \Pr_{x \in \{0,1\}^n, \pi \in \calS_J}[ f(x) \neq f( \pi x) ]\ .
$$
\end{definition}

It is not hard to see that in fact
a function $f$ is $t$-symmetric iff there exists a set $J$ of size $t$ such that $\SymInf_f(J) = 0$. A much stronger connection, however, exists between these properties as we will shortly describe.

Before showing some nice properties of symmetric influence, we mention that it also has a simple representation using Fourier coefficients of the function. Although we do not use the representation in this paper, we feel it might be of independent interest. See Appendix~\ref{appendix:syminf-via-fourier} for details.

\begin{lemma}\label{lem:distance-to-psf}
  Given a function $f: \{0,1\}^n \to \{0,1\}$ and
  a subset $J \subseteq [n]$,
  let $f_J$ be the $J$-symmetric function closest to $f$.
  Then, the symmetric influence of $J$ satisfies
  $$
  \dist(f, f_J) \leq \SymInf_f(J) \leq 2 \cdot \dist(f, f_J) \ .
  $$
\end{lemma}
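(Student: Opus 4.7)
The proof naturally splits into the two inequalities, and both rely on the orbit structure of the action of $\calS_J$ on $\{0,1\}^n$. Each orbit is determined by the assignment to the variables outside $J$ together with the Hamming weight on $J$, and $\calS_J$ acts transitively on each orbit. I will use this throughout.

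For the upper bound $\SymInf_f(J) \leq 2\cdot \dist(f, f_J)$, the plan is to apply the triangle inequality at the level of individual pairs $(x, \pi x)$. Since $f_J$ is $J$-symmetric we have $f_J(x) = f_J(\pi x)$ for every $\pi \in \calS_J$, so whenever $f(x) \neq f(\pi x)$ at least one of $f(x) \neq f_J(x)$ or $f(\pi x) \neq f_J(\pi x)$ must hold. Taking a union bound over these two events, and using the fact that $\pi x$ is uniformly distributed on $\{0,1\}^n$ whenever $x$ is (so both terms equal $\dist(f, f_J)$), the bound follows immediately.

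For the lower bound $\dist(f, f_J) \leq \SymInf_f(J)$, the plan is to exhibit a concrete $J$-symmetric function $g$ satisfying $\dist(f, g) \leq \SymInf_f(J)$; since $f_J$ is the closest such function, this suffices. I would define $g$ by majority vote along each $\calS_J$-orbit $O$: set $g(x) = 1$ if and only if at least half of the points in the orbit of $x$ satisfy $f(y) = 1$. This $g$ is manifestly constant on orbits of $\calS_J$, hence $J$-symmetric. Let $p_O$ be the fraction of $y \in O$ with $f(y) = 1$. Then the contribution of $O$ to $\dist(f, g)$ is $\tfrac{|O|}{2^n} \min(p_O, 1 - p_O)$, while its contribution to $\SymInf_f(J)$ is $\tfrac{|O|}{2^n}\cdot 2 p_O(1 - p_O)$, since conditional on $x \in O$ both $x$ and $\pi x$ are uniform on $O$ (by transitivity) and therefore $\Pr[f(x) \neq f(\pi x) \mid x \in O] = 2 p_O(1-p_O)$. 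The elementary inequality $\min(p, 1-p) \leq 2 p(1-p)$, valid for all $p \in [0,1]$, then yields the bound orbit-by-orbit, and summing over orbits completes the argument.

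I do not anticipate any serious obstacle. The only point requiring care is the per-orbit computation of $\SymInf_f(J)$: one must verify that when $x$ is uniform on $\{0,1\}^n$ and $\pi$ is uniform in $\calS_J$, then conditioned on $x$ lying in a given orbit $O$, the pair $(x, \pi x)$ is distributed as two independent uniform samples from $O$, so that disagreement occurs with probability exactly $2 p_O(1-p_O)$. This is a direct consequence of the transitivity of $\calS_J$ on $O$.
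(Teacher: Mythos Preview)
Your proof is correct and, for the lower bound, essentially identical to the paper's: the paper also decomposes $\{0,1\}^n$ into the $\calS_J$-orbits (parametrized there as ``layers'' $L^{w}_{\overline{J}\leftarrow z}$), computes the per-orbit contribution to $\SymInf_f(J)$ as $2p(1-p)$ and to $\dist(f,f_J)$ as $\min(p,1-p)$, and applies the same elementary inequality.

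The one genuine difference is your handling of the upper bound. The paper derives $\SymInf_f(J)\le 2\,\dist(f,f_J)$ from the same orbit-by-orbit computation, using $2p(1-p)\le 2p$ for $p\in[0,\tfrac12]$. You instead use a direct triangle-inequality argument: since $f_J$ is $J$-symmetric, $f(x)\neq f(\pi x)$ forces $f(x)\neq f_J(x)$ or $f(\pi x)\neq f_J(\pi x)$, and both events have probability $\dist(f,f_J)$ because $\pi x$ is uniform. This is a cleaner route to that inequality and in fact works with \emph{any} $J$-symmetric function in place of $f_J$, not just the closest one; the paper's computation, on the other hand, yields both inequalities simultaneously from a single explicit formula.
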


\begin{proof}
For every weight $ 0 \leq w \leq n$ and $z \in \{0,1\}^{|\overline{J}|}$, define the layer $\layer{w}{J}{z} := \{x \in \{0,1\}^n \mid |x| = w \wedge x_{\overline{J}} = z\}$ to be the vectors of Hamming weight $w$ which identify with $z$ over the set $\overline{J}$
(where $|\layer{w}{J}{z}| = {|J| \choose w - |z|}$ if $|z| \leq w \leq |J| + |z|$ or 0 otherwise).
Let $p^w_z \in [0, \tfrac 12]$ be the fraction of the vectors in $\layer{w}{J}{z}$ one has to modify in order to make the restriction of $f$ over $\layer{w}{J}{z}$ constant.

With these notations, we can restate the definition of the symmetric influence of $J$ as follows.
\begin{eqnarray*}
\SymInf_f(J) 
&=&
  \sum_{z}\sum_w  \Pr_{x \in \{0,1\}^n} [x \in \layer{w}{J}{z}] \cdot \Pr_{x \in \{0,1\}^n, \pi \in \calS_J} [ f(x) \neq f(\pi x) \mid x \in \layer{w}{J}{z} ] \\
&=&
\frac{1}{2^n} \sum_z \sum_w |\layer{w}{J}{z}| \cdot  2p^w_z (1-p^w_z) \ .
\end{eqnarray*}
This holds as in each such layer, the probability that $x$ and $\pi x$ would result in two different outcomes is the probability that $x$ would be chosen out of the smaller part and $\pi x$ from the complement, or vise versa. 

The function $f_J$ can be obtained by modifying $f$ at $p^w_z$ fraction of the inputs in each layer $\layer{w}{J}{z}$, as each layer can be addressed separately and we want to modify as few inputs as possible. By this observation, we have the following equality.
$$
\dist(f, f_J) = \frac{1}{2^n} \sum_z \sum_w |\layer{w}{J}{z}| \cdot  p^w_z \ .
$$
But since $1 - p^w_z \in [\tfrac12,1]$, we have that $p^w_z \leq 2p^w_z(1-p^w_z) \leq 2p^w_z$ and therefore $\dist(f, f_J) \leq \SymInf_f(J) \leq 2 \cdot \dist(f, f_J)$ as required.
\end{proof}

\begin{corollary}\label{coro:syminf-psf}
  Let $f : \{0,1\}^n \to \{0,1\}$ be a function that is $\eps$-far from being $t$-symmetric.  Then for every set $J \subseteq [n]$ of size $|J| \geq t$, $\SymInf_f(J) \ge \eps$ holds.
\end{corollary}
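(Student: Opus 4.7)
The plan is to derive this corollary as a nearly immediate consequence of Lemma~\ref{lem:distance-to-psf}, exploiting the monotonicity of the notion of $t$-symmetry in the size of the invariant set.

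First I would fix an arbitrary set $J \subseteq [n]$ with $|J| \ge t$ and let $f_J$ denote the $J$-symmetric function closest to $f$, as in the statement of Lemma~\ref{lem:distance-to-psf}. The key observation is that any $J$-symmetric function is automatically $|J|$-symmetric, and therefore $t$-symmetric since $|J| \ge t$. In particular, $f_J$ lies in the class of $t$-symmetric functions from which $f$ is assumed to be $\eps$-far.

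Consequently, $\dist(f, f_J) \ge \eps$. Applying the lower-bound side of Lemma~\ref{lem:distance-to-psf}, namely $\SymInf_f(J) \ge \dist(f, f_J)$, immediately yields $\SymInf_f(J) \ge \eps$, which is exactly what we wanted to show. Since $J$ was an arbitrary subset of size at least $t$, the conclusion holds for all such sets.

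There is essentially no obstacle to this argument: the work was already done in Lemma~\ref{lem:distance-to-psf}. The only point that requires any care is the logical step that the ``closest $J$-symmetric function'' lies in the class of ``$t$-symmetric functions,'' which follows directly from the definition once one notes that being $J$-symmetric is a stronger condition than being $t$-symmetric whenever $|J| \ge t$.
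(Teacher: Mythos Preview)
Your proposal is correct and matches the paper's own proof essentially line for line: fix $J$ with $|J|\ge t$, observe that the closest $J$-symmetric function is in particular $t$-symmetric (hence at distance at least $\eps$ from $f$), and then invoke the inequality $\SymInf_f(J)\ge\dist(f,f_J)$ from Lemma~\ref{lem:distance-to-psf}. There is nothing to add.
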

\begin{proof}
  Fix $J \subseteq [n]$ of size $|J| \geq t$ and let $g$ be a $J$-symmetric function closest to $f$. Since $g$ is symmetric on any subset of $J$, it is in particular $t$-symmetric and therefore $\dist(f,g) \geq \eps$ as $f$ is $\eps$-far from being $t$-symmetric. Thus, by Lemma~\ref{lem:distance-to-psf}, $\SymInf_f(J) \geq \dist(f,g) \geq \eps$ holds.
\end{proof}

Corollary \ref{coro:syminf-psf} demonstrates the strong connection between symmetric influence and the distance from being partially symmetric, similar to the second part of Lemma~\ref{lem:inf-fkrss} for influence and juntas. The additional properties of influence used in Section \ref{sec:juntas} are monotonicity and sub-additivity (Lemma~\ref{lem:inf-fkrss}). The following lemmas show that the same
properties (approximately) hold for symmetric influence. See Appendices~\ref{appendix:syminf-monotonicity} and~\ref{appendix:syminf-subadditivity} for the proofs of both lemmas.

\begin{lemma}[Monotonicity]
\label{lem:syminf-monotonicity}
For any function $f : \{0,1\}^n \to \{0,1\}$ and any sets $J \subseteq K \subseteq [n]$,
$$
\SymInf_f(J) \le \SymInf_f(K)\ .
$$
\end{lemma}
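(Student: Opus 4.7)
The plan is to reuse the layer decomposition from the proof of Lemma~\ref{lem:distance-to-psf}. That proof essentially established that for any set $S \subseteq [n]$,
\[
\SymInf_f(S) = \frac{1}{2^n}\sum_{z,w} |\layer{w}{S}{z}| \cdot 2 p^w_z(1-p^w_z),
\]
where $p^w_z \in [0,\tfrac12]$ is the minority fraction of $f$ on the layer $\layer{w}{S}{z}$. Equivalently, $p^w_z(1-p^w_z) = q^w_z(1-q^w_z)$ for the corresponding fraction $q^w_z$ of $1$-inputs on that layer, since the orbit of any $x$ under $\calS_S$ is exactly its layer, so $\pi x$ is uniform on that layer and disagreement averages to $2q(1-q)$. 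I would apply this representation to both $J$ and $K$.

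The structural observation driving the proof is that because $J \subseteq K$ (and hence $\overline{K} \subseteq \overline{J}$), the $J$-layers refine the $K$-layers: a $K$-layer fixes the coordinates in $\overline{K}$ and the total weight, while a $J$-layer additionally fixes the coordinates in $K \setminus J$. Thus each layer $\layer{w}{K}{z_K}$ is the disjoint union of the layers $\layer{w}{J}{z_J}$ ranging over the extensions of $z_K$ to $\{0,1\}^{\overline{J}}$.

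The last step is to invoke concavity of $t \mapsto 2t(1-t)$ on $[0,1]$. For a fixed $K$-layer $L_K$ with $1$-fraction $q_K$, decomposing $L_K = \bigsqcup_i L_J^{(i)}$ with $1$-fractions $q_J^{(i)}$ gives $q_K = \sum_i (|L_J^{(i)}|/|L_K|) \cdot q_J^{(i)}$, so Jensen's inequality yields
\[
2q_K(1-q_K) \ \ge\ \sum_i \frac{|L_J^{(i)}|}{|L_K|} \cdot 2 q_J^{(i)}(1-q_J^{(i)}).
\]
Multiplying through by $|L_K|/2^n$ and summing over all $K$-layers — whose $J$-refinements together enumerate every $J$-layer exactly once — produces $\SymInf_f(K) \ge \SymInf_f(J)$.

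There is no real obstacle here; the only thing to be careful about is the bookkeeping that confirms the $J$-layers genuinely partition each $K$-layer. A shorter but strictly weaker alternative would be to combine Lemma~\ref{lem:distance-to-psf} with the fact that any $K$-symmetric function is also $J$-symmetric (so $\dist(f,f_J) \le \dist(f,f_K)$), but the factor of $2$ in Lemma~\ref{lem:distance-to-psf} would only deliver $\SymInf_f(J) \le 2\SymInf_f(K)$, losing the clean constant in the stated monotonicity.
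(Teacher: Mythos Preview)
Your proposal is correct and is essentially the same argument as the paper's: both express $\SymInf_f(S)$ via the layer decomposition as a weighted sum of $2q(1-q)$ terms (the paper writes this as $2\Var_x[f(x)\mid x\in\layer{w}{S}{z}]$), observe that the $J$-layers refine the $K$-layers since $\overline{K}\subseteq\overline{J}$, and then apply Jensen's inequality using the concavity of $t\mapsto t(1-t)$. Your added remark that going through Lemma~\ref{lem:distance-to-psf} alone would cost a factor of $2$ is also correct.
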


\begin{lemma}[Weak sub-additivity]
\label{lem:syminf-weak-subadditivity}
There is a universal constant $c$ such that,
for any constant $0 < \gamma < 1$, a function $f : \{0,1\}^n \to \{0,1\}$, and sets $J, K \subseteq [n]$ of size at least $(1-\gamma) n$,
$$
\SymInf_f(J \cup K) \leq \SymInf_f(J) + \SymInf_f(K) + c \sqrt {\gamma}\ .
$$
\end{lemma}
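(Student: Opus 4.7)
The plan is to bound $\SymInf_f(J \cup K)$ by a two-permutation coupling combined with a hypergeometric total-variation estimate. Writing $A = J \setminus K$, $B = K \setminus J$, and $L = J \cap K$, the hypotheses $|J|, |K| \ge (1-\gamma)n$ give $|A|, |B| \le \gamma n$. I would draw $x \in \{0,1\}^n$, $\pi_J \in \calS_J$, and $\pi_K \in \calS_K$ independently and uniformly, and set $z = \pi_J x$ and $y = \pi_K z$. Because $z$ is itself uniform in $\{0,1\}^n$ (as a permutation of a uniform vector), the pair $(x, z)$ is distributed exactly as in the definition of $\SymInf_f(J)$ and $(z, y)$ as in $\SymInf_f(K)$. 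The pointwise triangle inequality $[f(x) \ne f(y)] \le [f(x) \ne f(z)] + [f(z) \ne f(y)]$ therefore yields $\Pr[f(x) \ne f(y)] \le \SymInf_f(J) + \SymInf_f(K)$.

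It remains to show that the joint distribution of $(x, y) = (x, \pi_K \pi_J x)$ is within $O(\sqrt \gamma)$ in total variation of $(x, \pi x)$, where $\pi \in \calS_{J \cup K}$ is uniform. For a fixed $x$, both $\pi x$ and $y$ lie in the $\calS_{J \cup K}$-orbit of $x$, which I would parameterize by the weight triple $(|u_A|, |u_B|, |u_L|)$ together with the spatial pattern inside each of $A, B, L$. A short computation verifies two facts: first, conditional on the weight triple, both distributions are uniform on the corresponding orbit layer; second, the conditional distribution of $|u_B|$ given $|u_A|$ is $\mathrm{Hyp}\bigl(|B|+|L|,\, |x_{J \cup K}| - |u_A|,\, |B|\bigr)$ in both cases. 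Together these collapse the TV bound to
$$
\dtv(\pi x, y \mid x) \;=\; \dtv\bigl(\mathrm{Hyp}(|J\cup K|, |x_{J\cup K}|, |A|),\ \mathrm{Hyp}(|J|, |x_J|, |A|)\bigr).
$$

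The main obstacle is bounding this hypergeometric TV distance in expectation over $x$. My plan is to approximate each hypergeometric by a binomial $\mathrm{Bin}(|A|, p_i)$ with $p_1 = |x_{J\cup K}|/|J \cup K|$ and $p_2 = |x_J|/|J|$, losing only an $O(|A|/|J|) = O(\gamma)$ term by the Diaconis--Freedman bound, and then to apply the standard estimate $\dtv\bigl(\mathrm{Bin}(|A|, p_1), \mathrm{Bin}(|A|, p_2)\bigr) \le C\sqrt{|A|}\,|p_1 - p_2|$ on the event $p_1, p_2 \in [\tfrac14, \tfrac34]$, whose complement has exponentially small probability by Chernoff. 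A direct second-moment computation, using that $|x_A|, |x_B|, |x_L|$ are independent binomials for uniform $x$, gives $\E_x[(p_1 - p_2)^2] = O(\gamma/n)$. Cauchy--Schwarz then delivers $\E_x[\dtv(\pi x, y \mid x)] \le C \sqrt{|A| \cdot \gamma/n} + O(\gamma) = O(\gamma)$, which is $O(\sqrt\gamma)$ for $\gamma \le 1$. Combining this with the triangle-inequality bound above completes the proof for some universal constant $c$.
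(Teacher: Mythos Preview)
Your proof is correct and follows essentially the same route as the paper: couple $\pi \in \calS_{J\cup K}$ with the composition $\pi_K\pi_J$, reduce the per-$x$ total-variation distance to that between two hypergeometric laws (the paper's Lemma~\ref{lem:hypergeometric}), and then pass through binomial approximations (Lemmas~\ref{lem:h-and-b} and~\ref{lem:b-and-b}). The only difference is in the final averaging over $x$: the paper uses a good/bad split with a fixed deviation threshold $t \sim 1/\sqrt{\gamma}$, whereas your Cauchy--Schwarz on $\E_x[(p_1-p_2)^2]$ is slightly cleaner and in fact yields the sharper $O(\gamma)$ additive error.
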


\section{Testing partial symmetry}
\label{sec:psf} 

Let us now return to the problem of testing partial symmetry.  The goal of this section is to introduce
an efficient tester for this property by combining the ideas from Sections~\ref{sec:juntas} and~\ref{sec:syminf}.

We begin by introducing the testing algorithm \textsc{Partially-Symmetric-Test}. This algorithm is conceptually very
similar to the junta tester in Section~\ref{sec:juntas}.  Again, the main idea is to partition the 
variables into $O(k^2)$ parts and identify the parts that contain ``asymmetric'' variables. More
precisely, given a function $f : \{0,1\}^n \to \{0,1\}$, let us write $\core(f) \subseteq [n]$ to be
the maximum set $J$ of variables such that $f$ is $J$-symmetric.  We call the variables in
$\core(f)$ \emph{symmetric} and the variables in $[n] \setminus \core(f)$ are
called \emph{asymmetric}.  The function is $(n-k)$-symmetric iff it contains at most $k$ asymmetric 
variables. The algorithm exploits this characterization by trying to identify $k+1$ parts that contain
asymmetric variables.

\begin{algorithm}[tbh]
  \caption{\textsc{Partially-Symmetric-Test}$(f, k, \eps)$}
  \begin{algorithmic}[1]
    \STATE Create a random partition $\calI$ of $[n]$ into $r = \Theta(k^2/\epsilon^2)$ parts, and initialize $J:=\emptyset$.
    \STATE Pick a random workspace $W \in \calI$, and \textbf{if} $|W| < \frac{n}{2r}$ \textbf{then} fail.
    \FOR{each $i = 1$ to $\Theta(k/\epsilon)$}
    \STATE Let $I := $  \textsc{Find-Asymmetric-Set}$(f, \calI, J, W)$.
    \IF{$I \neq \emptyset$}
    \STATE Set $J := J \cup I$.
    \STATE \textbf{if} $J$ is the union of $> k$ parts \textbf{then} reject.
    \ENDIF
    \ENDFOR
    \STATE Accept.
  \end{algorithmic}
\end{algorithm}

There are two main differences in the analysis of \textsc{Partially-Symmetric-Test} and of
\textsc{Junta-Test} in Section~\ref{sec:juntas}.  The first is that we can no longer use a 
simple binary search algorithm to identify the parts that contain asymmetric variables, as we need to maintain the Hamming weight of our queries. To overcome
this challenge, we introduce the \textsc{Find-Asymmetric-Set} function, which satisfies the following
properties.

\begin{lemma}
\label{lem:properties-of-fas}
Let $f$ be a function, $\calI$ be a partition of $[n]$ into $r$ parts, $W\in \calI, |W| \geq \tfrac{n}{2r}$ be a workspace,
and $J$ be a union of parts from $\calI \setminus\{W\}$.
Then, there exists an algorithm \textsc{Find-Asymmetric-Set}$(f,\calI,J,W)$ which performs $O(\log r)$ queries such that
\begin{itemize}
\setlength{\itemsep}{0pt}
\item 
With probability $\SymInf_f(\overline{J})$, the algorithm returns a set $I \in \calI \setminus \{W\}$ disjoint to $J$; otherwise it returns $\emptyset$.
\item If $W$ has no asymmetric variable and $I \in \calI$ is returned, then $I$ has an asymmetric variable.
\end{itemize}
\end{lemma}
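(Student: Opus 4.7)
The plan is to realize \textsc{Find-Asymmetric-Set} as a hybrid binary search that leverages the workspace $W$ to preserve Hamming weights. Sample $x \in \{0,1\}^n$ and $\pi \in \calS_{\overline{J}}$ uniformly at random and query $f(x)$ and $f(\pi x)$. If the two values agree, return $\emptyset$; this happens with probability $1 - \SymInf_f(\overline{J})$ by the very definition of symmetric influence. The first bullet then follows as long as, in the complementary event, the returned set lies in $\calI \setminus \{W\}$ and is disjoint from $J$; this will be automatic from the construction below.

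For the main phase, enumerate the parts of $\calI \setminus \{W\}$ that are disjoint from $J$ as $S_1, \dots, S_m$, and interpolate between $x$ and $\pi x$ through a hybrid sequence $z^{(0)}, z^{(1)}, \dots, z^{(m)}$. On $J$ every $z^{(i)}$ agrees with $x$ (and therefore with $\pi x$, since $\pi$ fixes the coordinates in $J$); on $S_l$ it agrees with $\pi x$ for $l \le i$ and with $x$ for $l > i$; and on $W$ the bits are chosen so that $|z^{(i)}| = |x|$ and so that $z^{(0)} = x$ and $z^{(m)} = \pi x$ exactly. The hypothesis $|W| \geq n/(2r)$ provides the slack needed to absorb the weight discrepancy between $x|_{S_l}$ and $\pi x|_{S_l}$. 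Since $f(z^{(0)}) \neq f(z^{(m)})$, some consecutive pair $z^{(j-1)}, z^{(j)}$ must disagree under $f$; a standard binary search locates such an index $j$ in $O(\log r)$ queries, and the algorithm returns $S_j$.

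For the second bullet, suppose $W$ contains no asymmetric variable, i.e.\ $W \subseteq \core(f)$, and assume toward a contradiction that $S_j \subseteq \core(f)$ as well. Then $W \cup S_j \subseteq \core(f)$, so $f$ is invariant under every permutation in $\calS_{W \cup S_j}$. But $z^{(j-1)}$ and $z^{(j)}$ agree on $\overline{W \cup S_j}$ and have identical total Hamming weight, hence their restrictions to $W \cup S_j$ share a Hamming weight as well, so some permutation in $\calS_{W \cup S_j}$ sends one to the other. This forces $f(z^{(j-1)}) = f(z^{(j)})$, contradicting the output of the binary search. Therefore $S_j$ must contain an asymmetric variable.

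The main obstacle is arranging the Hamming-weight bookkeeping inside the workspace so that each hybrid $z^{(i)}$ is a legitimate element of $\{0,1\}^n$ with the correct total weight, and so that the endpoints genuinely equal $x$ and $\pi x$; this is precisely what the workspace $W$ and the lower bound on its size are designed to support. Once the hybrids are set up correctly, the second bullet collapses to the clean observation that a single permutation acting within $\core(f)$ can never change the value of $f$.
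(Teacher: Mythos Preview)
Your overall strategy---sample $(x,\pi)$, return $\emptyset$ unless $f(x)\neq f(\pi x)$, and otherwise binary-search along a weight-preserving hybrid path from $x$ to $\pi x$---is exactly the paper's, and your argument for the second bullet is clean. The genuine gap is in the construction of the hybrid sequence. You assert that $|W|\ge n/(2r)$ ``provides the slack needed to absorb the weight discrepancy between $x|_{S_l}$ and $\pi x|_{S_l}$,'' but this does not follow. The lemma places no upper bound on the sizes of parts other than $W$, so a single $S_l$ may have size of order $n/r$ or more; since $\pi$ moves bits freely inside $\overline{J}$, the discrepancy $\bigl||x_{S_l}|-|\pi x_{S_l}|\bigr|$ can be as large as $|S_l|$, which may exceed $|W|$. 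In that case no assignment of bits on $W$ can give $z^{(l)}$ the correct total weight. Even if every part happened to be no larger than $|W|$, processing the $S_l$ in a fixed enumeration lets the running partial sum of discrepancies drift outside $[0,|W|]$, again making some hybrid infeasible. So the ``main obstacle'' you flag at the end is not actually overcome.

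The paper repairs this with two ingredients you are missing. First, it \emph{refines} $\calI\setminus\{W\}$ into a partition $\calJ$ whose pieces each have size at most $\lceil |W|/4\rceil$; because $|W|\ge n/(2r)$ this produces only $O(r)$ pieces, so the binary search is still $O(\log r)$, and at the end one returns the unique $I\in\calI\setminus\{W\}$ containing the refined piece that was found. Second, it chooses the processing order \emph{adaptively} rather than fixing it in advance: at each step, if some remaining pieces would raise the $W$-count and others would lower it, pick one from whichever direction the workspace currently has more room (always at least $\lceil |W|/2\rceil$, which dominates any single piece of size $\le\lceil |W|/4\rceil$); if only one direction remains, the global weight constraint $|x^i|=|y|$ forces the required $W$-count to stay within $[0,|W|]$ for all subsequent steps. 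With these two devices in place your argument goes through; without them the hybrid path need not exist.
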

Due to space constraints, we provide a rough sketch of the algorithm and defer the details and analysis to Appendix~\ref{appendix:fas}. 
\textsc{Find-Asymmetric-Set} generates a random pair of $x \in \{0,1\}^n$ and $\pi \in \calS_{\overline{J}}$ and checks whether $f(x) \neq f(\pi x)$.
When this occurs, which happens with probability at least $\eps$ when $\SymInf_f(\overline{J}) \geq \epsilon$, we know there exists some asymmetric variable in $\overline{J}$. In order to identify a part $I \in \calI$, disjoint to $J$ and the workspace $W$, which contains an asymmetric variable we iteratively change $x$ to $\pi x$. 
In each step, we only \textit{permute} bits in one part $I \in \calI$ and the workspace $W$.
Since $f(x) \neq f(\pi x)$, we can find using binary search a set $I$, disjoint to $J$, such that permuting bits in $I \cup W$ changes the value of $f$.
By our assumption, $W$ has no asymmetric variables and therefore $I$ must contain such a variable.

The second and more important challenge in the analysis of \textsc{Partially-Symmetric-Test} is
the use of symmetric influence (rather than influence).
Similar to Lemma~\ref{lem:junta-main} for influence, we prove that if a function is far from being $(n-k)$-symmetric, then it is also far from being symmetric on any union of all but $k$ parts of a random partition (assuming it has enough parts).
The formal statement is given in Lemma~\ref{lem:psf-main}, where its proof follows a very similar technique to that of Lemma~\ref{lem:junta-main}.

\begin{lemma}\label{lem:psf-main}
  Let $f : \{0,1\}^n \to \{0,1\}$ be a function $\eps$-far from $(n-k)$-symmetric and $\calI$ be a random partition of $[n]$ into $r = c \cdot k^2/\eps^2$ parts, for some large enough constant $c$. Then with probability at least $8/9$, $\SymInf_f(\overline{J}) \ge \frac \eps{9}$ holds for any union $J$ of $k$ parts.
\end{lemma}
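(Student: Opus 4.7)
The plan is to mirror the proof of Lemma~\ref{lem:junta-main}, replacing $\Inf_f$ by $\SymInf_f$, using Corollary~\ref{coro:syminf-psf} in place of the second half of Lemma~\ref{lem:inf-fkrss}, and using Lemma~\ref{lem:syminf-weak-subadditivity} in place of ordinary sub-additivity. The main complication is the additive error $c\sqrt{\gamma}$ in weak sub-additivity together with its requirement that the sets involved have size at least $(1-\gamma)n$; both issues will be absorbed by restricting attention to ``small'' candidate unions, which in turn dictates the larger partition size $r = \Theta(k^2/\eps^2)$ (compared with $\Theta(k^2)$ in the junta case).

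Concretely, I define $\calF_t = \{S \subseteq [n] : \SymInf_f(\overline{S}) < t\eps\}$ and fix a constant $\gamma = \Theta(\eps^2)$ small enough that $c\sqrt{\gamma} \le \eps/10$. A Chernoff and union-bound calculation shows that, for $r$ a sufficiently large constant times $k^2/\eps^2$, with probability at least $17/18$ every union of $k$ parts of $\calI$ has size at most $\gamma n$; conditioned on this event I work entirely with the restricted families $\calF_t^{\mathrm{small}} := \{S \in \calF_t : |S| \le \gamma n\}$.

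The core of the argument is the intersecting-family structure on these restrictions. For any $S, S' \in \calF_{1/3}^{\mathrm{small}}$, the complements $\overline{S}, \overline{S'}$ have size at least $(1-\gamma)n$, so weak sub-additivity gives $\SymInf_f(\overline{S\cap S'}) = \SymInf_f(\overline{S} \cup \overline{S'}) < 2\cdot \tfrac{1}{3}\eps + c\sqrt{\gamma} \le \tfrac{7\eps}{10} < \eps$. Combined with Corollary~\ref{coro:syminf-psf}, this forces $|S \cap S'| > k$, so $\calF_{1/3}^{\mathrm{small}}$ is $(k+1)$-intersecting. The same computation with $t=1/9$ in place of $1/3$ shows that, \emph{provided} $\calF_{1/3}^{\mathrm{small}}$ contains no set of size less than $2k$, the family $\calF_{1/9}^{\mathrm{small}}$ is $2k$-intersecting: any small intersection $S \cap S'$ would satisfy $\SymInf_f(\overline{S \cap S'}) < 2\cdot \tfrac{1}{9}\eps + c\sqrt{\gamma} \le \tfrac{\eps}{3}$ and thus lie in $\calF_{1/3}^{\mathrm{small}}$, contradicting the size hypothesis.

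I then split into two cases as in Lemma~\ref{lem:junta-main}. If $\calF_{1/3}^{\mathrm{small}}$ contains some $J_0$ with $|J_0| < 2k$, then $\calI$ separates $J_0$ (places each of its elements in a distinct part) with probability at least $1 - O(k^2/r) = 1 - O(\eps^2)$; under that event any union $K$ of $k$ parts satisfies $|J_0 \cap K| \le k$, contradicting the $(k+1)$-intersecting property, so no such $K$ lies in $\calF_{1/9}^{\mathrm{small}} \subseteq \calF_{1/3}^{\mathrm{small}}$. Otherwise Theorem~\ref{thm:DS} applies to the $2k$-intersecting family $\calF_{1/9}^{\mathrm{small}}$ with $p = k/r < \tfrac{1}{2k+1}$, giving $\mu_p(\calF_{1/9}^{\mathrm{small}}) \le (k/r)^{2k}$, and a union bound over the $\binom{r}{k}$ possible unions of $k$ parts yields failure probability at most $\binom{r}{k}(k/r)^{2k} = O((k/r)^k) = o(1)$. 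Summing the exceptional probabilities gives the advertised $8/9$ success bound. I expect the main obstacle to be the joint calibration of $\gamma$ and $r$: the slack $c\sqrt{\gamma}$ must be small enough not to destroy either intersecting-family argument, yet $\gamma n$ must be large enough that a typical union of $k$ parts from $\calI$ falls below it with high probability.
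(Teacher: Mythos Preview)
Your proposal is correct and follows essentially the same approach as the paper's proof: restrict to ``small'' sets so that weak sub-additivity applies with error $O(\eps)$, show the resulting families $\calF_{1/3}^{\mathrm{small}}$ and $\calF_{1/9}^{\mathrm{small}}$ are $(k+1)$- and $2k$-intersecting respectively, and run the two-case analysis of Lemma~\ref{lem:junta-main}. The only cosmetic difference is that the paper builds the size bound $|J| \le 5kn/r$ into the definition of $\calF_t$ and then passes to the upward closure before invoking Theorem~\ref{thm:DS}, whereas you isolate the high-probability size event $E$ first and apply Theorem~\ref{thm:DS} directly to $\calF_{1/9}^{\mathrm{small}}$; both routes yield the same $\binom{r}{k}(k/r)^{2k}$ bound.
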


The main difference between this proof and the one of Lemma~\ref{lem:junta-main} arises from the weak sub-additivity of
symmetric influence. In light of this difference, our definition of families of sets whose complement has small symmetric influence\
includes only sets which are not too big. We use the observation that adding sets which contain elements of a family does not change
its existing intersection. In addition, due to the additive factor of the sub-additivity we prove a slightly weaker result
where the symmetric influence is at least $\eps/9$ and not $\eps/4$. The complete proof of Lemma~\ref{lem:psf-main} is deferred to Appendix~\ref{appendix:psf-main-proof}.

\bigskip
We can now complete the proof that partial symmetry is efficiently testable.

\begin{proof}[Proof of Theorem~\ref{thm:psf-test}]
  Note that $|W| \geq \frac{n}{2r}$ indeed holds with probability at least $8/9$ from Chernoff bound.
  By Lemma~\ref{lem:properties-of-fas}, \textsc{Find-Asymmetric-Set} performs $O(\log \frac{k}{\eps})$ queries
  according to our choice of $r$, and therefore the query complexity of \textsc{Partially-Symmetric-Test} is
  $O(\frac{k}{\epsilon} \log \frac{k}{\epsilon})$. 

  Suppose $f$ is an $(n-k)$-symmetric function.
  The probability that $W$ contains an asymmetric variable is at most $k/r \leq 2/9$.
  Conditioned this did not occur, every set returned by \textsc{Find-Asymmetric-Set} contains an asymmetric variable.
  Since there are at most $k$ such variables, $J$ would be the union of at most $k$ sets and we would accept.

  Suppose $f$ is a function $\epsilon$-far from being $(n-k)$-symmetric.
  From Lemma~\ref{lem:psf-main},
  with probability at least $8/9$,
  $\SymInf_f(\overline{J}) \geq \epsilon/9$ holds while $J$ consists of at most $k$ parts.
  Conditioned on that, by executing \textsc{Find-Asymmetric-Set} $O(k/\epsilon)$ times 
  we obtain more than $k$ parts with probability at least $8/9$, according to Lemma~\ref{lem:properties-of-fas}.
  Thus, we reject with probability at least $2/3$.
\end{proof}

\section{Isomorphism testing of partially symmetric functions}
\label{sec:psf-iso-test}

In this section we prove that isomorphism testing of partially symmetric functions can be done with a constant number of queries. The algorithm we describe consists of two main components, and follow a similar approach to the one used in~\cite{chakraborty2011nearly} when they showed juntas are isomorphism testable. The first, which we already described in Section~\ref{sec:psf}, is an efficient tester for the property of being partially symmetric. Once we know the input function is indeed close to being partially symmetric, we can verify it is isomorphic (or at least very close) to the correct one. 
The second component of the algorithm is therefore an efficient sampler from the \emph{core} of a function which is (close to) partially symmetric. Comparing the cores of two partially symmetric functions suffices to identify if two such functions are isomorphic or far from it.

Ideally, when sampling the core of a partially symmetric function $f$, we would like to sample it according to the marginal distribution of sampling $f$ at a uniform input $x \in \{0,1\}^n$. We denote this marginal distribution over $\{0,1\}^k \times \{0,1, \ldots, n-k\} $ by $\calD^*_{k,n}$, which is in fact uniform over $\{0,1\}^k$ and binomial over $\{0,1, \ldots, n-k\}$, independently.

In our scenario, sampling the core of a function according to this distribution is not possible since we do not know the exact location of all the $k$ asymmetric variables. Instead, we use the knowledge discovered by the partial symmetry tester, i.e., sets with asymmetric variables. Given these sets, we are able to define a sampling distribution over $\{0,1\}^n$ such that we know the input of the core for each query, and whose marginal distribution over the core is close enough to $\calD^*_{k,n}$.

\begin{definition}
Let $\calI$ be some partition of $[n]$ into an odd number of parts and let $W \in \calI$ be the workspace.
Define the distribution $\calDIW$ over $\{0,1\}^n$ to be as follows. 
Pick a random Hamming weight $w$ according to the binomial distribution over $\{0, \ldots, n\}$ and output, if exists, a random $x \in \{0,1\}^n$ of Hamming weight $|x| = w$ such that for every $I \in \calI \setminus \{W\}$, either $x_I \equiv 0$ or $x_I \equiv 1$. 
When no such $x$ exists, return the all zeros vector.
\end{definition}

The sampling distribution which we just defined, together with the random choice of the partition and workspace, satisfies the following two important properties. The first, being close to uniform over the inputs of the function.
The second, having a marginal distribution over the core of a partially symmetric function close to $\calD^*_{k,n}$.
These properties are formally written here as Proposition~\ref{prop:samples}, whose proof
is rather technical and appears in Appendix~\ref{appendix:dist-diw}.

\begin{proposition}\label{prop:samples}
Let  $J=\{j_1, \ldots, j_k\} \subseteq [n]$ be a set of size $k$, and $r = \Omega(k^2)$ be odd.
If $x \sim \calDIW$ for a random partition $\calI$ of $[n]$ into $r$ parts and a random workspace $W \in \calI$, then
\vspace{-5pt}
\begin{itemize}
\setlength{\itemsep}{-2pt}
\item $x$ is $o(1/n)$-close to being uniform over $\{0, 1\}^n$, and
\item $(x_J, |x_{\overline{J}}|)$ is $c/k$-close to being distributed according to $\calD^*_{k,n}$, for our choice of $0 < c < 1$.
\end{itemize}
\end{proposition}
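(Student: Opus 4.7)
My plan is to establish both properties by exploiting the permutation-invariance of the random pair $(\calI,W)$, reducing the closeness-to-uniform question to a single failure-probability bound.

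For the first property, I would fix a nonzero $y \in \{0,1\}^n$ with weight $w = |y|$ and write
$$
\Pr_{\calDIW}[x = y] = \frac{\binom{n}{w}}{2^n} \cdot \mathbb{E}_{\calI,W}\!\left[\frac{\mathbf{1}\{y \text{ monochromatic on each } I\in\calI\setminus\{W\}\}}{M_{\calI,W,w}}\right],
$$
where $M_{\calI,W,w}$ counts weight-$w$ vectors that are monochromatic on every non-workspace part. Because the joint distribution of $(\calI,W)$ is invariant under permutations of $[n]$, and $M_{\calI,W,w}$ depends only on the multiset of part sizes, this expectation depends on $y$ only through $w$. Summing over all $y$ of weight $w$ and using $\sum_{|y|=w}\mathbf{1}\{y \text{ mono}\}/M_{\calI,W,w} = \mathbf{1}\{\text{some valid $y$ exists}\}$ gives that, conditional on non-failure, the output is \emph{exactly} uniform over $\{0,1\}^n$. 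Hence the total variation distance to uniform equals the failure probability $p_{\mathrm{fail}} = \Pr_{\calI,W,w}[\text{no valid $x$ of weight $w$ exists}]$, whose mass is deposited entirely on the zero vector. I would bound $p_{\mathrm{fail}} = o(1/n)$ by showing that, for $r = \Omega(k^2)$, Chernoff concentration of the $r$ part sizes combined with $|W|\ge n/(2r)$ implies the achievable weights $\{\sum_{i\in S} s_i + w_W : S\subseteq[r-1],\ 0\le w_W\le |W|\}$ cover $[0,n]$ with gaps below $|W|$, so a typical draw $w\sim\mathrm{Bin}(n,1/2)$ is unreachable only on an event of probability $o(1/n)$.

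For the second property, I would invoke the data-processing inequality: the deterministic map $x \mapsto (x_J, |x_{\overline{J}}|)$ pushes the uniform distribution on $\{0,1\}^n$ forward to exactly $\calD^*_{k,n}$, so the $o(1/n)$ bound from the first property automatically transfers to this marginal and is stronger than the $c/k$ target whenever $k=o(n)$. As a self-contained alternative avoiding the delicate failure analysis, one could condition on the ``birthday'' event that the $k$ elements of $J$ fall in distinct non-workspace parts, which holds with probability $1-O(k^2/r)$; permutation symmetry within this event then yields the marginal directly.

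The main obstacle will be the granularity argument certifying $p_{\mathrm{fail}} = o(1/n)$. Covering $[0,n]$ by subset sums of the non-workspace part sizes requires simultaneously controlling the Chernoff fluctuations of the part sizes and the discreteness of the subset-sum lattice, all within an $o(1/n)$ tolerance; this is the technically delicate step. Once the granularity estimate is in place, the symmetry reduction above delivers both properties cleanly.
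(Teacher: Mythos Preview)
Your route is sound but genuinely different from the paper's. For the first bullet the paper does not analyse $p_{\mathrm{fail}}$ at all; instead it builds a coupling: draw $w\sim\calB_{n,1/2}$, set $x'=1^w0^{n-w}$, pick a random \emph{consecutive} partition $\calI'$ (with the balls-in-bins size profile), let $W'$ be the unique part straddling coordinate $w$, and then apply a uniformly random permutation to obtain $(x,\calI,W)$. In this procedure $x$ is exactly uniform and failure never occurs, so the distance to the true $\calDIW$ process is pushed entirely onto the discrepancy in the distribution of $|W|$, which the paper then bounds case-by-case in $r$. For the second bullet the paper runs a separate hypergeometric-versus-binomial comparison (Lemmas~\ref{lem:h-and-b} and~\ref{lem:b-and-b}) and only obtains the weaker $c/k$ bound. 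Your data-processing step is strictly simpler and yields the stronger $o(1/n)$ conclusion for free once the first bullet is in hand, so that part is a real improvement over the paper.

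Two small caveats on your first-bullet argument. The sentence ``conditional on non-failure the output is exactly uniform over $\{0,1\}^n$'' is not literally correct: different weights $w$ can have different success probabilities over $(\calI,W)$, so the conditional law is uniform only within each Hamming-weight layer. Your key identity $\dtv(x,\mathrm{Unif})=p_{\mathrm{fail}}$ is nevertheless right, because every nonzero $y$ receives mass exactly $2^{-n}\Pr_{\calI,W}[\text{success at }|y|]\le 2^{-n}$ and the entire deficit is deposited at $0$. Second, you invoke $|W|\ge n/(2r)$, but that is a check in the algorithm, not a hypothesis of the proposition; you must also absorb $\Pr[|W|<n/(2r)]$ into the $o(1/n)$ budget, and the crude Chernoff bound $e^{-\Theta(n/r)}$ is not automatically $o(1/n)$ across the full parameter range. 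This is precisely the ``delicate step'' you flag, and it is the only real work in your approach; the paper's coupling sidesteps it by constructing a process in which failure is impossible by design.
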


We are now ready to describe the algorithm for isomorphism testing of $(n-k)$-symmetric functions.
Given an $(n-k)$-symmetric function $f$, the algorithm tests
whether the input function $g$ is isomorphic to $f$ or $\eps$-far from being so.

\begin{algorithm}
  \caption{\textsc{Partially-Symmetric-Isomorphism-Test}$(f, k, g, \eps)$}
  \begin{algorithmic}[1]
    \STATE Perform \textsc{Partially-Symmetric-Test}$(g,k, \eps/1000)$ and reject if failed.
    \STATE Let $\calI$ and $W \in \calI$ be the partition and workspace used by the algorithm.
    \STATE Let $J$ be the union of the $k$ parts identified by the algorithm.
    \FOR{each $i = 1$ to $\Theta(k \log k/\epsilon^2)$}
      \STATE Query $g(x)$ at a random $x \sim \calDIW$
    \ENDFOR
    \STATE Accept iff $(1-\eps/2)$-fraction of the queries are consistent with some isomorphism $f_{\pi}$ of $f$, 
        which maps the asymmetric variables of $f$ into all $k$ parts of $J$.
  \end{algorithmic}
\end{algorithm}

We provide here a sketch of the analysis of the algorithm. See Appendix~\ref{appendix:iso-psf-proof} for the formal analysis and complete proof of Theorem~\ref{thm:psf-iso-test}.
The first case to analyze is when $g$ is rejected by \textsc{Partially-Symmetric-Test}, which implies that with
good probability it is not $(n-k)$-symmetric and in particular not isomorphic to $f$. 
Assume now that \textsc{Partially-Symmetric-Test} did not reject and therefore $g$ is likely to be $\eps/1000$-close to being
$(n-k)$-partially symmetric.
Let $\calI, W$ and $J$ be the partition, workspace and union of $k$ parts identified by the algorithm.
The main idea of the proof is showing that with good probability, there exists a function $h$ that (a)
is $\eps/250$-close to $g$, and (b)
is $(n-k)$-symmetric with asymmetric variables contained in $J$ and separated by $\calI$.
We prove the existence of this function $h$ using the properties of symmetric influence presented in Section~\ref{sec:psf}.
Assuming such $h$ exists, we use Proposition~\ref{prop:samples} in order to
show that our queries to $g$, according to the sampling distribution, are in fact $\eps/10$-close to querying $h$'s core.

We now consider the following two cases. If $g$ is isomorphic to $f$, then for some isomorphism $f_{\pi}$ of $f$, which maps the
asymmetric variables of $f$ into the parts of $J$, it holds that
$\dist(f_{\pi}, h) \leq \dist(f_{\pi}, g) + \dist(g,h) \leq \eps/500 + \eps/250$. Notice that we cannot assume that $g = f_{\pi}$ as it
is possible that one of the asymmetric variables of $g$ are not in $J$ (but the distance must be small).
If $g$ was $\eps$-far from being isomorphic to $f$, then for every isomorphism $f_\pi$ of $f$, 
$$
\dist(f_{\pi}, h) \geq \dist(f_{\pi}, g) - \dist(g,h) \geq \eps - \eps/250 \ .
$$
Given that there are only $k!$ isomorphisms of $f$ we need to consider, performing $\Theta(k \log k/\epsilon^2)$
queries suffices for returning the correct answer in both cases, with good probability.

\bigskip
As we outlined above, we in fact build an efficient sampler for the core of $(n-k)$-symmetric functions (or functions close to being so). Given the parts identified by \textsc{Partially-Symmetric-Test}, assuming it did not reject, we can sample the function's core by querying it at a single location, where the distribution over the core's inputs is close to $\calD^*_{k,n}$. The algorithm and proof of Theorem~\ref{thm:sampler} are deferred to Appendix~\ref{appendix:sampler}.

\section{Discussion}
\label{sec:discussion}

We showed that every partially symmetric function is isomorphism testable with a constant number of queries.  It's easy to see that functions that are ``close'' to partially symmetric can also be isomorphism-tested with a constant number of queries.  We believe that our result not only unifies the previous classes of functions efficiently isomorphism-testable, but that it includes essentially \emph{all} of these functions.

\begin{conjecture}
\label{conj:lb}
Let $f : \{0,1\}^n \to \{0,1\}$ be $\eps$-far from $(n-k)$-symmetric.  Then testing $f$-isomorphism requires at least $\Omega(\log \log k)$ queries.
\end{conjecture}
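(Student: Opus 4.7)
The plan is to reduce to the $\Omega(\log \log k)$ lower bound of Blais and O'Donnell~\cite{blais2010lower}, which gives this query complexity for testing isomorphism to any $k$-junta that is $\Omega(1)$-far from every $(k-1)$-junta. The proposed argument has three components.

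First, a structural step: show that every $f$ which is $\eps$-far from $(n-k)$-symmetric admits a random restriction $\rho$ fixing $n-k'$ of the coordinates, for some $k' = k^{\Omega(1)}$, such that the restricted function $f_\rho$ is a $k'$-junta that is $\Omega(\eps)$-far from every $(k'-1)$-junta. The intuition is that being far from $(n-k)$-symmetric means no set of $n-k$ variables can absorb the asymmetry, and a partitioning argument analogous to Lemma~\ref{lem:psf-main} should produce $\Omega(k)$ variables each contributing genuinely to the asymmetry. Symmetric influence (Section~\ref{sec:syminf}) would be the main tool: one identifies $k'$ coordinates whose symmetric influences are pairwise non-redundant in the sense of Lemmas~\ref{lem:syminf-monotonicity} and~\ref{lem:syminf-weak-subadditivity}, then argues that a random restriction of the remaining coordinates preserves the $(k'-1)$-junta barrier with constant probability.

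Second, a reduction step: convert any $q$-query tester for $f$-isomorphism into a $q$-query tester for $f_\rho$-isomorphism by extending each query $y \in \{0,1\}^{k'}$ to a full input via $\rho$. A function $g'$ isomorphic to $f_\rho$ should lift through $\rho$ to a function whose distance to the isomorphism class of $f$ is $o(\eps)$, while any $g'$ that is $\eps$-far from isomorphic to $f_\rho$ should lift to a function $\Omega(\eps)$-far from every isomorphic copy of $f$. Verifying this requires that the symmetric ``filler'' coming from $\rho$ is consistent with the partially-symmetric approximation of $f$, so $\rho$ should be chosen to match the candidate symmetric part of $f$---for instance, by conditioning on inputs of typical Hamming weight on the filled coordinates so that $f$ is close to constant on the fibers of $\rho$ restricted to the symmetric side. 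Applying the Blais--O'Donnell bound to $f_\rho$ then yields $q = \Omega(\log \log k') = \Omega(\log \log k)$, contradicting $q = o(\log \log k)$.

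The main obstacle is the structural step. Being far from $(n-k)$-symmetric provides a much weaker combinatorial handle than being far from a $(k-1)$-junta: many disjoint sets of $k$ coordinates could each plausibly serve as the ``asymmetric core,'' and restricting the remaining coordinates can easily collapse $f_\rho$ into a junta on fewer than $k'$ variables---exactly the situation Blais--O'Donnell does not cover. Overcoming this will likely require a covering-style argument that iterates the partition-selection idea behind \textsc{Partially-Symmetric-Test} in order to extract a set of asymmetric variables that remains rigid under random restrictions. An alternative that may be cleaner is to bypass the reduction and develop a direct Yao-style lower bound in the spirit of~\cite{blais2010lower,blais2011property}, constructing two distributions over functions---one supported on isomorphic copies of $f$ and one on functions $\eps$-far from that class---whose induced query-answer distributions are statistically indistinguishable on any transcript of $o(\log \log k)$ queries; here the hypothesis that $f$ is far from $(n-k)$-symmetric should translate into enough ``richness'' in the orbit of $f$ under the symmetric group $S_n$ to enable such an indistinguishability argument, with symmetric influence again serving as the quantitative measure of this richness.
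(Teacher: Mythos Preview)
The statement you are attempting to prove is Conjecture~\ref{conj:lb}, which the paper explicitly leaves \emph{open}; there is no proof in the paper to compare against. The discussion in Section~\ref{sec:discussion} only records partial evidence: the $\Omega(k)$ bound for \emph{almost all} such $f$~\cite{alon2011nearly}, and an $\Omega(\log\log k)$ bound for the special case where $f$ is itself $(n-k)$-symmetric but $\eps$-far from $(n-k+1)$-symmetric (obtained by a direct adaptation of~\cite{blais2010lower}). Your proposal is therefore an attempt at an open problem, and it should be judged as such.

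Both components of your plan have genuine gaps. For the structural step, there is no reason to expect that an arbitrary $f$ that is $\eps$-far from $(n-k)$-symmetric admits a restriction $f_\rho$ that is a $k'$-junta \emph{and} $\Omega(\eps)$-far from every $(k'-1)$-junta. Distance from partial symmetry is a statement about permutations of large sets of coordinates, not about the existence of a small rigid core; the function $f(x)=x_1\oplus\cdots\oplus x_{n/2}$ in Section~\ref{sec:discussion} already shows that ``far from $(n-k)$-symmetric'' can coexist with a great deal of internal symmetry, and nothing in Lemmas~\ref{lem:syminf-monotonicity}--\ref{lem:syminf-weak-subadditivity} or Lemma~\ref{lem:psf-main} produces the kind of per-coordinate rigidity you would need after restriction. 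You flag this yourself, but the proposal contains no mechanism to get past it.

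The reduction step is also not sound as stated. You want a $q$-query $f$-isomorphism tester to yield a $q$-query $f_\rho$-isomorphism tester, so given $g'$ on $k'$ variables you must manufacture an $n$-variable input $G$ for the $f$-tester. For the reduction to be valid you need: (i) if $g'\cong f_\rho$ then $G$ is close to some isomorphic copy of $f$, and (ii) if $g'$ is far from every isomorphic copy of $f_\rho$ then $G$ is far from every isomorphic copy of $f$. Neither direction follows from your lifting. For (i), $f_\rho$ captures only the behavior of $f$ on a single fiber of $\rho$; two non-isomorphic $n$-variable functions can easily share the same restriction, so matching $f_\rho$ says almost nothing about matching $f$ globally. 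For (ii), an isomorphism of $f$ need not respect the partition into ``free'' and ``fixed'' coordinates of $\rho$ at all, so $G$ could be close to an isomorphic copy of $f$ that scrambles those roles even when $g'$ is far from $f_\rho$. Your suggestion to choose $\rho$ ``to match the candidate symmetric part of $f$'' does not help, because the hypothesis is precisely that $f$ has no good symmetric part of size $n-k$.

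Your alternative---a direct Yao argument in the style of~\cite{blais2010lower}---is the more promising route, and indeed is how the partial results cited in Section~\ref{sec:discussion} are obtained. But turning ``$\eps$-far from $(n-k)$-symmetric'' into an indistinguishability statement for \emph{every} such $f$ is exactly the content of the conjecture; the proposal does not supply the missing idea.
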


In fact, we believe that more is true---perhaps even $\Omega(k)$ queries are required.  But the weaker bound (or, indeed, any function that grows with $k$) is sufficient to complete the qualitative characterization of functions that are isomorphism-testable with a constant number of queries.

The known hardness results on isomorphism testing are all consistent with Conjecture~\ref{conj:lb}.
In particular, by the result in~\cite{alon2011nearly}, we know
that testing $f$-isomorphism requires at least $\Omega(k)$ queries for \emph{almost all} functions
$f$ that are $\eps$-far from $(n-k)$-symmetric.  A simple extension of the proof in~\cite{blais2010lower} shows that for every $(n-k)$-symmetric function $f$ that is $\eps$-far from $(n-k+1)$-symmetric, testing $f$-isomorphism requires
$\Omega(\log \log k)$ queries (assuming $k/n$ is bounded away from 1).

Lastly, let us consider another natural definition of
partial symmetry that encompasses both symmetric functions and juntas. The function
 $f : \{0,1\}^n \to \{0,1\}$ is $k$-\emph{part symmetric} if there is a partition 
 $\calI = \{I_1,\ldots,I_k\}$ of $[n]$ such that
$f$ is invariant under any permutation $\pi$ of $[n]$ where $\pi(I_i) = I_i$ for every $i = 1,\ldots,k$. One may be tempted to guess that $k$-part symmetric functions are efficiently isomorphism-testable. That is not the case, even when $k = 2$.  To see this, consider
the function $f(x) = x_1 \oplus x_2 \oplus \cdots \oplus x_{n/2}$. This function is $2$-part symmetric,
but testing isomorphism to $f$ requires $\Omega(n)$ queries~\cite{blais2011property}.

\section*{Acknowledgments}

We thank Noga Alon, Per Austrin, Irit Dinur, Ehud Friedgut, and Ryan O'Donnell for useful discussions and valuable feedback.

\bibliographystyle{plain}
\bibliography{TestingPSFs}

\appendix

\section{Properties of symmetric influence}

\subsection{Fourier representation of symmetric influence}
\label{appendix:syminf-via-fourier}

For convenience, we consider functions whose ranges are $\{-1,1\}$ instead of $\{0,1\}$.
Then, the symmetric influence of a function can be expressed as follows.
\begin{proposition}\label{prop:syminf-fourier}
Given a Boolean function $f: \{0,1\}^n \to \{-1,1\}$ and a set $J \subseteq [n]$,
the symmetric influence of $J$ with respect to $f$ can also be computed as
$$
\SymInf_f(J) = \tfrac{1}{2} \sum_{S \subseteq [n]} \Var_{\pi \in \calS_J} [ {\widehat{f}(\pi S)} ]
$$
where $\widehat{f}(S)$ is the Fourier coefficient of $f$ for the set $S \subseteq [n]$, and $\pi S = \{ \pi(i) \mid i \in S \}$.
\end{proposition}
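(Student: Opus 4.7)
The plan is to convert the probability of disagreement into an $L^2$ quantity, apply Parseval to move to the Fourier side, and then repackage the resulting sum of squared differences using the orbit decomposition of the action of $\calS_J$ on subsets of $[n]$. Since $f$ takes values in $\{-1,+1\}$, the indicator of disagreement equals $\tfrac14(f(x)-f(\pi x))^2$, so as a starting point
$$
\SymInf_f(J) \;=\; \tfrac14 \E_{x,\pi}\bigl[(f(x) - f(\pi x))^2\bigr].
$$

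Using the Fourier expansion $f = \sum_S \widehat{f}(S)\chi_S$ with $\chi_S(x) = \prod_{i\in S}(-1)^{x_i}$, and the convention $(\pi x)_{\pi(i)} = x_i$ from the excerpt, I would first verify by direct substitution that $\chi_S(\pi x) = \chi_{\pi^{-1}S}(x)$. Reindexing then shows that the function $x \mapsto f(\pi x)$ has Fourier coefficient $\widehat{f}(\pi T)$ at the set $T$. Parseval (applied at each fixed $\pi$) and averaging over $\pi \in \calS_J$ therefore give
$$
\SymInf_f(J) \;=\; \tfrac14 \sum_T \E_\pi\bigl[(\widehat{f}(T) - \widehat{f}(\pi T))^2\bigr],
$$
and it remains only to show that this equals $2\sum_S \Var_\pi[\widehat{f}(\pi S)]$.

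For the final step I would partition the outer sum by orbits of the action of $\calS_J$ on subsets of $[n]$: two subsets share an orbit iff they agree on $\overline{J}$ and have the same intersection size with $J$, and for each $T$ the random set $\pi T$ is uniformly distributed on its orbit $O(T)$. Applying the elementary identity $\sum_{T,T' \in O}(a_T - a_{T'})^2 = 2|O|^2\,\Var_{T \in O}[a_T]$ with $a_T = \widehat{f}(T)$, the contribution of an orbit $O$ to the sum above is $2|O|\,\Var_\pi[\widehat{f}(\pi S)]$ for any $S \in O$. Since each $S \in O$ carries the same variance, the factor of $|O|$ is exactly absorbed when summing orbit-by-orbit, yielding $2\sum_S \Var_\pi[\widehat{f}(\pi S)]$ as required. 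The main obstacle I anticipate is purely bookkeeping: tracking the Fourier convention correctly (in particular, that $\chi_S(\pi x) = \chi_{\pi^{-1}S}(x)$ rather than $\chi_{\pi S}(x)$, which is then compensated by the relabeling $T \leftrightarrow \pi T$) and reindexing the orbit sums without over- or under-counting. Everything else reduces to one Parseval computation plus this one-line orbit identity.
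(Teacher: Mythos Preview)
Your proof is correct and follows essentially the same route as the paper: both start from $\SymInf_f(J)=\tfrac14\E_{x,\pi}[(f(x)-f(\pi x))^2]$, pass to the Fourier side via Parseval, and finish by exploiting the orbit structure of $\calS_J$ acting on subsets. The only cosmetic difference is that the paper expands the square and handles the cross term through a separate lemma computing $\E_{x,\pi}[\chi_S(x)\chi_T(\pi x)]$, whereas you apply Parseval directly to $f-f\circ\pi$ and then use the orbit identity $\sum_{T,T'\in O}(a_T-a_{T'})^2=2|O|^2\Var_{T\in O}[a_T]$; your packaging is arguably a bit more direct.
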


The proposition indicates that the symmetric influence of any set $J$ can be computed as a function of the variance of the
Fourier coefficients of the function in the different layers. Each layer here refer to all the Fourier coefficients
of sets which share the intersection with $[n] \setminus J$ and the intersection size with $J$,
resulting in $(|J|+1)2^{n-|J|}$ different layers.

\bigskip
The key to proving this proposition is the following basic result on linear functions.
Recall that for a set $S \subseteq [n]$, the function $\chi_S : \{0,1\}^n \to \{-1,1\}$ is
defined by $\chi_S(x) = (-1)^{\sum_{i \in S} x_i}$.

\begin{lemma}\label{lem:syminf-fourier-mult}
Fix $J, S, T \subseteq [n]$. Then
$$
\E_{x \in \{0,1\}^n, \pi \in \calS_J}[ \chi_S(x) \cdot \chi_T(\pi x) ] =
\begin{cases}
{|J| \choose |S \cap J|}^{-1} & \mbox{if } \exists \pi \in \calS_J, \ \pi S = T \\
\ \ \ 0 & \mbox{otherwise.}
\end{cases}
$$
\end{lemma}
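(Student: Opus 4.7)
The plan is to reduce the mixed expectation to a clean counting problem by first handling the $x$-average and then the $\pi$-average.

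First I would rewrite $\chi_T(\pi x)$ as a character evaluated at $x$. Since $\pi \in \calS_J$ is a permutation acting on coordinates with $(\pi x)_j = x_{\pi^{-1}(j)}$, a direct computation shows
$$
\chi_T(\pi x) = (-1)^{\sum_{j \in T} x_{\pi^{-1}(j)}} = (-1)^{\sum_{i \in \pi^{-1}(T)} x_i} = \chi_{\pi^{-1}(T)}(x).
$$
This lets us pull the $\pi$ out of the character structure on $x$.

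Next I would exchange the order of expectations and invoke the orthonormality of the characters. By Fourier, $\E_{x}[\chi_S(x) \chi_{\pi^{-1}(T)}(x)] = \mathbf{1}[S = \pi^{-1}(T)] = \mathbf{1}[\pi S = T]$. Therefore
$$
\E_{x,\pi}[\chi_S(x)\chi_T(\pi x)] = \Pr_{\pi \in \calS_J}[\pi S = T].
$$

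Finally I would compute this probability by a straightforward counting argument. A permutation $\pi \in \calS_J$ fixes $\overline{J}$ pointwise, so $\pi S = T$ is only possible when $S \cap \overline{J} = T \cap \overline{J}$ and $\pi$ maps $S \cap J$ onto $T \cap J$; in particular $|S \cap J| = |T \cap J|$ is necessary, which is subsumed by the existence of \emph{some} $\pi \in \calS_J$ with $\pi S = T$. When such a $\pi$ exists, the number of permutations of $J$ mapping $S \cap J$ to $T \cap J$ is $|S \cap J|! \,(|J| - |S \cap J|)!$, out of $|J|!$ total, giving probability $\binom{|J|}{|S \cap J|}^{-1}$. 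When no such $\pi$ exists the event is empty and the probability is $0$, matching the stated cases.

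I do not expect a serious obstacle here; the only subtlety is making sure the "$S \cap J$ to $T \cap J$" bijection count is stated cleanly and that the case split matches the piecewise formula. Everything else reduces to the identity $\chi_T \circ \pi = \chi_{\pi^{-1}T}$ and character orthogonality.
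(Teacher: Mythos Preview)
Your proposal is correct and follows essentially the same approach as the paper: both use the identity $\chi_T(\pi x)=\chi_{\pi^{-1}T}(x)$, apply character orthogonality to reduce the expectation to $\Pr_{\pi\in\calS_J}[\pi S=T]$, and then count the permutations of $J$ mapping $S\cap J$ onto $T\cap J$. Your counting via $|S\cap J|!\,(|J|-|S\cap J|)!/|J|!$ is exactly the paper's argument, just phrased slightly more explicitly.
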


\begin{proof}
For any vector $x \in \{0,1\}^n$, any set $S \subseteq [n]$, and any permutation $\pi \in \calS_n$, we have the identity
$\chi_S(\pi x) = \chi_{\pi^{-1} S}(x)$.
So
$$
\E_{x \in \{0,1\}^n,\pi \in \calS_J}[ \chi_S(x) \cdot \chi_T(\pi x) ] =
  \E_{x,\pi}[ \chi_S(x) \chi_{\pi^{-1} T}(x) ] = \E_\pi\left[ \E_x[ \chi_S(x) \chi_{\pi^{-1}T}(x)]\right]\ .
$$
But $\mathbf{E}_x[ \chi_S(x) \chi_{\pi^{-1}T}(x)] = \mathbf{1}[ S = \pi^{-1} T]$, so we also have
$$
\E_{x \in \{0,1\}^n,\pi \in \calS_J}[ \chi_S(x) \cdot \chi_T(\pi x) ] = \Pr_{\pi \in \calS_J}[ S = \pi^{-1}T] = \Pr_{\pi \in \calS_J}[\pi S = T]\ .
$$
The identity $\pi S= T$ holds iff the permutation $\pi$ satisfies $\pi(i) \in T$ for every $i \in S$.
Since we only permute elements from $J$, the sets $S$ and $T$ must agree on the elements of $[n] \setminus J$.
If this is not the case, or if the intersection of the sets with $J$ is not of the same size, no such permutation exists.
Otherwise, this event occurs if the elements of $S \cap J$ are mapped to the exact locations of $T \cap J$.
This holds for one out of the ${|J| \choose |S \cap J|}$ possible sets of locations, each with equal probability.
\end{proof}

\begin{proof}[Proof of Proposition~\ref{prop:syminf-fourier}]
By appealing to the fact that $f$ is $\{-1,1\}$-valued, we have that
$$
\Pr_{x,\pi}[ f(x) \neq f(\pi x)] = \frac{1}{4}\E_{x, \pi}[ f(x)^2 + f(\pi x)^2 - 2 f(x) f(\pi x)].
$$
Applying linearity of expectation and Parseval's identity, we obtain
$$
\E_{x, \pi}[ f(x)^2 + f(\pi x)^2 - 2 f(x) f(\pi x)] =
  2 \sum_{S \subseteq [n]} \hat{f}(S)^2 - 2 \sum_{S,T \subseteq [n]} \hat{f}(S) \hat{f}(T) \E_{x,\pi}[ \chi_S(x) \chi_T(\pi x)]\ .
$$
Fix any $S \subseteq [n]$.  By Lemma~\ref{lem:syminf-fourier-mult},
$$
\sum_{T \subseteq [n]} \hat{f}(T) \E_{x,\pi}[ \chi_S(x) \chi_T(\pi x)] =
  \sum_{\pi \in \calS_J } \frac{\hat{f}(\pi S)}{{|J| \choose |S \cap J|}} = \E_{\pi \in \calS_J }[ \hat{f}(\pi S) ]\ .
$$
Given this equality,
$$
\sum_{S,T \subseteq [n]} \hat{f}(S) \hat{f}(T) \E_{x,\pi}[ \chi_S(x) \chi_T(\pi x)] =
  \sum_S \hat{f}(S) \E_{\pi \in \calS_J}[ \hat{f}(\pi S)]\ .
$$
By applying some elementary manipulation, we now get
\begin{eqnarray*}
\Pr_{x,\pi}[ f(x) \neq f(\pi x)]
 & = & \frac{1}{2} \sum_{S} \hat{f}(S) \left(\hat{f}(S) -  \E_{\pi}[ \hat{f}(\pi S)]\right) = \\
 & = & \frac{1}{2} \sum_{S} (\E_{\pi}[ \hat{f}(\pi S)^2] - \E_{\pi}[ \hat{f}(\pi S)]^2) = \frac{1}{2}\sum_S \Var_\pi[ \hat{f}(\pi S)]\ . 
\end{eqnarray*}
\end{proof}

\subsection{Monotonicity of symmetric influence}
\label{appendix:syminf-monotonicity}
\newtheorem*{lemmonotone}{Lemma~\ref{lem:syminf-monotonicity}}
\begin{lemmonotone}[Restated]
For any function $f : \{0,1\}^n \to \{0,1\}$ and any sets $J \subseteq K \subseteq [n]$,
$$
\SymInf_f(J) \le \SymInf_f(K)\ .
$$
\end{lemmonotone}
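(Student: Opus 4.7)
The plan is to reduce monotonicity to a short concavity argument by exploiting the same layer decomposition already used in the proof of Lemma~\ref{lem:distance-to-psf}. The group $\calS_J$ acts on $\{0,1\}^n$, and its orbits are exactly the layers $\layer{w}{J}{z}$: two inputs lie in the same orbit iff they agree on $\overline{J}$ and share the same Hamming weight. On each orbit $O$, the action is transitive, so for $x$ uniform on $O$ and $\pi \in \calS_J$ uniform, $\pi x$ is also uniform on $O$. A direct computation shows that the conditional disagreement probability $\Pr[f(x) \neq f(\pi x) \mid x \in O]$ equals $2 p_O(1 - p_O)$, where $p_O$ denotes the fraction of $O$ on which $f$ takes value $0$. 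Thus I can rewrite
$$
\SymInf_f(J) \;=\; \sum_{O} \frac{|O|}{2^n} \cdot 2 p_O(1 - p_O),
$$
the sum ranging over $\calS_J$-orbits, and analogously for $K$.

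Next I would invoke the subgroup inclusion $\calS_J \subseteq \calS_K$: every $\calS_J$-orbit lies inside some $\calS_K$-orbit, so each $\calS_K$-orbit $O'$ decomposes as a disjoint union of $\calS_J$-orbits $O_1, \ldots, O_m$. The fraction $p_{O'} = \sum_i (|O_i|/|O'|)\, p_{O_i}$ is then a convex combination of the $p_{O_i}$, and since the map $t \mapsto 2t(1-t)$ is concave on $[0,1]$, Jensen's inequality gives
$$
2 p_{O'}(1 - p_{O'}) \;\ge\; \sum_i \frac{|O_i|}{|O'|} \cdot 2 p_{O_i}(1 - p_{O_i}).
$$
Multiplying both sides by $|O'|/2^n$ and summing over all $\calS_K$-orbits $O'$, the right-hand side telescopes into $\sum_O (|O|/2^n) \cdot 2 p_O(1-p_O) = \SymInf_f(J)$, while the left-hand side is exactly $\SymInf_f(K)$, yielding the desired inequality.

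There is essentially no obstacle here: once the orbit/layer viewpoint is in place, the proof is a one-line application of Jensen's inequality. The only point that merits care is checking that every $\calS_K$-orbit is a union of $\calS_J$-orbits, which is immediate from the subgroup inclusion, and that the identity $\SymInf_f(J) = \sum_O (|O|/2^n)\, 2 p_O(1 - p_O)$ holds; this is precisely the layer computation carried out in the proof of Lemma~\ref{lem:distance-to-psf}, so I can simply cite it rather than reproducing it.
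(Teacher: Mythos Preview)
Your proposal is correct and is essentially the same argument as the paper's. The paper phrases the orbits as ``layers'' $\layer{w}{J}{z}$ and writes $2p_O(1-p_O)$ as $\Var_x[f(x)\mid x\in O]$, but the substance---refinement of $\calS_K$-orbits by $\calS_J$-orbits followed by Jensen's inequality for the concave map $t\mapsto 2t(1-t)$---is identical.
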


\begin{proof}
Fix a function $f$ and two sets $J,K \subseteq [n]$ so that $J \subseteq K$. We have seen before that the
symmetric influence can be computed in layers, where each layer
is determined by the Hamming weight and the elements outside the set we are considering. 
Using the fact that $\Var(X) = \Pr[X=0]\cdot \Pr[X=1]$,
the symmetric influence is twice the expected variance over all the
layers (considering also the size of the layers). Using the same notation as before,
\begin{eqnarray*}
  \SymInf_f(J) & = & \frac{1}{2^n} \sum_{z} \sum_{w} |\layer{w}{J}{z}|\cdot 2 \Var_{x}
  [f(x) \mid x \in \layer{w}{J}{z}] \\
  & = & 2\cdot \E_{y} \left[
    \Var_{x} [f(x) \mid x \in \layer{|y|}{J}{y_{\overline{J}}}] \right] \ .
\end{eqnarray*}

A key observation is that since $\overline{K} \subseteq \overline{J}$,
the layers determined when considering $J$ are a refinement of the
layers determined when considering $K$. Together with the fact that
$\Var(X) = \Pr[X=0]\cdot \Pr[X=1]$ is a concave function in the range
$[0,1]$, we can apply Jensen's inequality on each layer before and
after the refinement to get the desired inequality. More precisely,
for every $z\in\{0,1\}^{|\overline{K}|}$ and $0 \leq w \leq n$,
$$
\Var_{x} [f(x) \mid x \in \layer{w}{K}{z} ] \geq \E_{y}
\left[ \Var_{x} [f(x) \mid x \in \layer{w}{J}{y_{\overline{J}}} ] \mid
y \in \layer{w}{K}{z} \right] \ .
$$
Averaging this over all layers, we get the desired result.
\end{proof}

\subsection{Weak sub-additivity of symmetric influence}
\label{appendix:syminf-subadditivity}

In this section we prove that symmetric influence satisfies weak sub-additivity. It might be tempting to think that
strong sub-additivity holds, as in the standard notion of influence, however this is not the case.
For example, consider the function $f(x) = f_1(x_J) \oplus f_2(x_K)$ for some partition $[n] = J \cup K$ and two randomly
chosen symmetric functions $f_1, f_2$. Since $f$ is far from symmetric, $\SymInf_f([n]) = \SymInf_f(J \cup K) > 0$
while $\SymInf_f(J) = \SymInf_f(K) = 0$.

The additive factor of $c\sqrt{\gamma}$ in Lemma~\ref{lem:syminf-weak-subadditivity} is derived from the
distance between the two distributions $\pi_{J \cup K} x$ and $\pi_J \pi_K x$, for a random $x \in \{0,1\}^n$
and random permutations from $\calS_{J\cup K}, \calS_J, \calS_K$. When the sets $J$ and $K$ are large,
the distance between these distributions is relatively small which therefore result in this weak sub-additivity property.

The analysis of the lemma is done using hypergeometric distributions, and the distance between them.
Let $\calH_{n,m,k}$ be the hypergeometric distribution obtained when we pick $k$ balls out of $n$,
$m$ of which are red, and count the number of red balls we obtained.
Let $\dtv(\cdot,\cdot)$ denote the statistical distance between two distributions.
The following two lemmas would be useful for our proof.

\begin{lemma}\label{lem:hypergeometric}
  Let $J,K \subseteq [n] $ be two sets and $\pi, \pi_J, \pi_K$ be permutations chosen uniformly at random from $\calS_{J \cup K}, \calS_J,\calS_K$, respectively.
  For a fixed $x \in \{0,1\}^n$, 
  we define $\calD_{\pi x}$ and $D_{\pi_J \pi_K x}$ as the distribution of $\pi x$ and $\pi_J \pi_K x$, respectively.
  Then, 
  \begin{eqnarray*}
    \dtv(D_{\pi x}, D_{\pi_J \pi_K x}) = \dtv(\calH_{|J \cup K|,|x_{J \cup K}|,|K \setminus J|}, \calH_{|K|,|x_K|, |K \setminus J|})
  \end{eqnarray*}
  holds.
\end{lemma}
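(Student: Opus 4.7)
The plan is to reduce the total variation between the two distributions over $\{0,1\}^n$ to the total variation between the marginal distributions of a single statistic, namely $|y_{K \setminus J}|$, and to identify that statistic's law as a hypergeometric in each case.

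First I would observe that both distributions are supported on vectors that agree with $x$ outside $J \cup K$, since $\pi, \pi_J, \pi_K$ only move coordinates inside $J, K,$ or $J \cup K$. Hence the TV distance is entirely determined by what happens on the coordinates in $J \cup K$. Next, I claim that the statistic $m(y) := |y_{K \setminus J}|$ is sufficient in the following sense: conditioned on $m(y) = m$, both $D_{\pi x}$ and $D_{\pi_J \pi_K x}$ are the \emph{same} uniform distribution, namely the uniform distribution over vectors $y$ that agree with $x$ off $J \cup K$, have exactly $m$ ones in $K \setminus J$ (in a uniform subset of positions), and have exactly $|x_{J \cup K}| - m$ ones distributed uniformly over the coordinates in $J$. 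For $D_{\pi x}$ this is immediate from the definition of $\pi \in \calS_{J \cup K}$. For $D_{\pi_J \pi_K x}$ I would argue in two stages: after $\pi_K$, the positions of the ones in $K \setminus J$ and in $K \cap J$ are uniform given their totals; the subsequent permutation $\pi_J$ does not touch $K \setminus J$, and it uniformly re-randomizes the ones inside $J$ whose total is $(|x_K| - m) + |x_{J \setminus K}|$. A direct counting check (using $|x_{J \cup K}| = |x_{J \cap K}| + |x_{J \setminus K}| + |x_{K \setminus J}|$) shows this total equals $|x_{J \cup K}| - m$, matching the other case.

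Given this conditional-identity, a standard fact about total variation gives
\[
\dtv(D_{\pi x}, D_{\pi_J \pi_K x}) = \dtv\bigl(\mathrm{Law}(m \mid \pi x),\ \mathrm{Law}(m \mid \pi_J \pi_K x)\bigr).
\]
Then I would identify each marginal. Under $D_{\pi x}$, the $|x_{J \cup K}|$ ones are placed uniformly among the $|J \cup K|$ positions and $m$ counts how many land in the $|K \setminus J|$ distinguished positions, which is exactly $\calH_{|J \cup K|, |x_{J \cup K}|, |K \setminus J|}$. Under $D_{\pi_J \pi_K x}$, the coordinates in $K \setminus J$ are frozen between the two permutations and never touched by $\pi_J$, so their number of ones is determined entirely by $\pi_K$: placing $|x_K|$ ones uniformly in $|K|$ positions and counting those landing in the $|K \setminus J|$ positions gives $\calH_{|K|, |x_K|, |K \setminus J|}$.

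The main obstacle will be the bookkeeping in the conditional argument—particularly verifying that, after composing $\pi_K$ and $\pi_J$, the distribution on the $J$-coordinates is genuinely uniform given $m$, and that the arithmetic matches $|x_{J \cup K}| - m$. Once that is pinned down carefully, the rest is a short invocation of the sufficient-statistic identity for total variation.
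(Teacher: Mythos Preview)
Your proposal is correct and follows essentially the same approach as the paper's proof: the paper also reduces to the statistic $z = |y_{K\setminus J}|$, defines the conditional distribution $D_z$ (uniform over vectors with $z$ ones in $K\setminus J$ and $|x_{J\cup K}|-z$ ones in $J$), shows both $D_{\pi x}$ and $D_{\pi_J\pi_K x}$ are mixtures of the same $D_z$'s with hypergeometric mixing weights, and concludes via disjointness of the $D_z$'s. Your ``sufficient statistic'' phrasing and the paper's ``common mixture components'' phrasing are two ways of expressing the same argument.
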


\begin{lemma}\label{lem:dtv}
  Let $n,m,n',m',k$ be non-negative integers with $k,n' \leq \gamma n$ for some $\gamma \leq \frac{1}{2}$.
  Suppose that $| m -  \frac{n}{2} | \leq t \sqrt{n}$ and $|m' - \frac{n'}{2}| \leq t \sqrt{n'}$ hold
  for some $t \leq \frac{1}{100 \sqrt{\gamma} }$.
  Then, 
  \begin{eqnarray*}
    \dtv(\calH_{n,m,k},\calH_{n-n',m-m',k}) \leq c_{\ref{lem:dtv}} (1 + t) \gamma\ .
  \end{eqnarray*}
  holds for some universal constant $c_{\ref{lem:dtv}}$.
\end{lemma}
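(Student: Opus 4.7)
The plan is to directly compare the probability mass functions $P(j) := \Pr[\calH_{n,m,k} = j]$ and $Q(j) := \Pr[\calH_{n-n',m-m',k} = j]$ via a second-order Taylor expansion, then bound the resulting KL divergence and convert to total variation via Pinsker's inequality. Set $p := m/n$ and $p' := m'/n'$; the hypotheses yield $|p - \tfrac12| \leq t/\sqrt{n}$, $|p' - \tfrac12| \leq t/\sqrt{n'}$, and hence $|p - p'| \leq 2t/\sqrt{n'}$. The condition $t \leq 1/(100\sqrt{\gamma})$ further ensures $p, p' \in (\tfrac14, \tfrac34)$, so $p(1-p) = \Theta(1)$.

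I would begin by writing the exact ratio
\[
\frac{Q(j)}{P(j)} = \prod_{i=0}^{j-1}\frac{m - m' - i}{m - i} \cdot \prod_{i=0}^{k-j-1}\frac{n - n' - m + m' - i}{n - m - i} \cdot \prod_{i=0}^{n'-1}\frac{n - i}{n - k - i},
\]
then taking logarithms and applying $\log(1 + x) = x - x^2/2 + O(x^3)$ factor-by-factor (valid since each $|x| = O(\gamma)$). After collecting terms using $p = m/n$ and $p' = m'/n'$, I expect
\[
\log\frac{Q(j)}{P(j)} = \alpha\,(kp - j) + \beta\,\frac{(j - kp)^2}{k} + C + \text{smaller terms},
\]
where $\alpha = \frac{n'(p' - p)}{n\,p(1-p)}$, $\beta = O(\gamma^2)$, and $C$ is pinned by the normalization $\sum_j Q(j) = 1$. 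From the hypotheses, $|\alpha| = O(t\sqrt{\gamma/n})$. Combining this with $\Var_P(J) \leq k/4$, fourth central moment $\mu_4 = O(k^2)$, and $k \leq \gamma n$, the second-order KL expansion (justified by $\E_P[Q(J)/P(J)] = 1$, which makes the first-order term vanish) yields
\[
\mathrm{KL}(P, Q) \leq \tfrac12\,\E_P\!\left[\left(\log\tfrac{Q(J)}{P(J)}\right)^2\right] + O(\text{cubic}) \leq O(\alpha^2 k) + O(\beta^2) \leq O(t^2\gamma^2) + O(\gamma^4).
\]
Pinsker's inequality then gives $\dtv(P, Q) \leq \sqrt{\mathrm{KL}/2} = O(\gamma\sqrt{t^2 + \gamma^2}) = O((1+t)\gamma)$, using $\gamma \leq 1/2$.

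The main obstacle will be making the Taylor expansion rigorous uniformly in $j$: for $j$ far from $kp$, individual factors in the ratio can deviate substantially from $1$ and the first-order expansion fails. To handle this, I plan to split the sum defining $\dtv$ into a typical regime $|j - kp| \leq C\sqrt{k\log(1/\gamma)}$, where every factor in the product lies in $1 \pm O(\gamma)$ and the expansion applies uniformly (so the KL bound above is justified), and an atypical regime whose combined $P(j) + Q(j)$ mass is bounded by Serfling's tail inequality $\Pr[|J - kp| > a] \leq 2e^{-2a^2/k}$, contributing at most $O(\gamma)$ to $\dtv$. Summing the two contributions yields the claimed bound $c_{\ref{lem:dtv}}(1+t)\gamma$.
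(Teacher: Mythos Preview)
Your approach is quite different from the paper's. The paper routes through the binomial distribution: by the triangle inequality,
\[
\dtv(\calH_{n,m,k},\calH_{n-n',m-m',k}) \le \dtv(\calH_{n,m,k},\calB_{k,p}) + \dtv(\calB_{k,p},\calB_{k,p'}) + \dtv(\calB_{k,p'},\calH_{n-n',m-m',k}),
\]
with $p = m/n$ and $p' = (m-m')/(n-n')$ (note: \emph{not} your $p' = m'/n'$). The two outer terms are each at most $k/n \le \gamma$ by a cited hypergeometric--binomial bound, and the middle term is handled by a cited binomial--binomial estimate once one shows $|p-p'| = O(t\sqrt{\gamma/n})$, yielding $O(t\gamma)$. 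No Taylor expansion of the pmf ratio, no KL/Pinsker.

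There is a genuine gap in your direct route. The factor-by-factor expansion $\log(1+x)=x-x^2/2+O(x^3)$ with $|x|=O(\gamma)$ over $k+n'=O(\gamma n)$ factors does \emph{not} yield a small constant term at $j=kp$. Summing the first-order pieces at $j=kp$ gives
\[
-\frac{km'}{n}-\frac{k(n'-m')}{n}+\frac{kn'}{n-k}=\frac{k^2 n'}{n(n-k)}=\Theta(\gamma^3 n),
\]
which can be arbitrarily large (e.g.\ $\gamma=1/10$, $n=10^6$ gives about $10^3$). The exact $L_{kp}$ is in fact small, but only because the order-$r$ corrections, each of size $\Theta(\gamma^{r+2}n)$, cancel against one another; no fixed-order truncation captures this. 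Consequently your claim that ``every factor lies in $1\pm O(\gamma)$ so the expansion applies uniformly'' does not deliver $|L_j|\ll 1$ on the typical window, which is exactly what you need for the step $\mathrm{KL}=\tfrac12\E_P[L^2]+O(\text{cubic})$ to be non-vacuous. The bound $\E_P[L^2]\le O(\alpha^2 k)+O(\beta^2)$ is likewise unjustified: the constant contribution enters as $C^2$, and ``$C$ is pinned by normalization'' only says $\E_P[e^L]=1$, constraining the \emph{mean} of $L$, not its value at $j=kp$ or its second moment. To salvage this line you would first need an independent proof that $L_{kp}$ is small (e.g.\ via Stirling), at which point the factor-by-factor expansion is no longer doing the work. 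The paper's binomial detour avoids all of this.
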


We first show how these lemmas imply the proof of Lemma~\ref{lem:syminf-weak-subadditivity},
and will afterwards prove them.

\newtheorem*{lemweak}{Lemma~\ref{lem:syminf-weak-subadditivity}}
\begin{lemweak}[Restated]
There is a universal constant $c$ such that,
for any constant $0 < \gamma < 1$, a function $f : \{0,1\}^n \to \{0,1\}$ and sets $J, K \subseteq [n]$ of size at least $(1-\gamma) n$,
$$
\SymInf_f(J \cup K) \leq \SymInf_f(J) + \SymInf_f(K) + c \sqrt {\gamma}\ .
$$
\end{lemweak}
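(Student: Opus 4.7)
The plan is to reduce the symmetric influence of $J \cup K$ to a combination of $\SymInf_f(J)$ and $\SymInf_f(K)$ by replacing a single uniform permutation $\pi \in \calS_{J \cup K}$ with the composition $\pi_J \pi_K$ of independent uniform permutations drawn from $\calS_J$ and $\calS_K$. These two random permutations induce slightly different distributions on $\{0,1\}^n$, but Lemmas~\ref{lem:hypergeometric} and~\ref{lem:dtv} are precisely designed to bound the resulting total variation distance when $J$ and $K$ are nearly all of $[n]$.

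The first step rewrites
$$\SymInf_f(J \cup K) = \Pr_{x, \pi}[f(x) \neq f(\pi x)]$$
and switches $\pi$ to $\pi_J \pi_K$, paying an additive term $\E_x[\dtv(D_{\pi x}, D_{\pi_J \pi_K x})]$. Next, by the triangle inequality applied to the indicator of $f(x) \neq f(\cdot)$,
$$\Pr_{x, \pi_J, \pi_K}[f(x) \neq f(\pi_J \pi_K x)] \leq \Pr_{x, \pi_K}[f(x) \neq f(\pi_K x)] + \Pr_{x, \pi_J, \pi_K}[f(\pi_K x) \neq f(\pi_J \pi_K x)].$$
The first summand is exactly $\SymInf_f(K)$. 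Since $\pi_K$ preserves the uniform distribution on $\{0,1\}^n$, the variable $\pi_K x$ is itself uniform when $x$ is, so the second summand is exactly $\SymInf_f(J)$.

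The remaining task is to show $\E_x[\dtv(D_{\pi x}, D_{\pi_J \pi_K x})] = O(\sqrt{\gamma})$. Lemma~\ref{lem:hypergeometric} equates this TV distance with $\dtv(\calH_{|J \cup K|, |x_{J \cup K}|, |K \setminus J|}, \calH_{|K|, |x_K|, |K \setminus J|})$, a distance between two hypergeometric laws drawing $|K \setminus J| \leq \gamma n$ balls. I would split the expectation over a uniform $x$ according to whether the Hamming weights $|x_K|$ and $|x_{J \cup K}|$ are ``typical''. Chernoff's inequality shows that for $t = \Theta(\gamma^{-1/2})$ (small enough to satisfy the hypothesis of Lemma~\ref{lem:dtv}), both weights lie within $t$ standard deviations of their means except on an event of probability $e^{-\Omega(1/\gamma)}$, which is negligible compared to $\sqrt{\gamma}$. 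On the typical event Lemma~\ref{lem:dtv} applies and yields TV distance $O((1+t)\gamma) = O(\sqrt{\gamma})$; on the atypical event we use the trivial bound of $1$. The main obstacle is calibrating $t$ so that Lemma~\ref{lem:dtv}'s hypothesis $t \leq 1/(100\sqrt{\gamma})$ is satisfied while still making the Chernoff tail controlled; once this is arranged, summing the three contributions yields the desired inequality with a universal constant $c$.
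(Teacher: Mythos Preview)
Your plan is essentially identical to the paper's proof: replace $\pi$ by $\pi_J\pi_K$, pay the total-variation error, split into $\SymInf_f(J)+\SymInf_f(K)$ via the triangle inequality and the fact that $\pi_K x$ is uniform, then control the TV term with Lemmas~\ref{lem:hypergeometric} and~\ref{lem:dtv} after separating typical from atypical $x$.

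One small calibration issue: to invoke Lemma~\ref{lem:dtv} you need $|m'-n'/2|\le t\sqrt{n'}$ with $n'=|J\cup K|-|K|=|J\setminus K|$ and $m'=|x_{J\cup K}|-|x_K|=|x_{J\setminus K}|$, so the ``typical'' event should control $|x_{J\cup K}|$ and $|x_{J\setminus K}|$ directly (as the paper does), not $|x_{J\cup K}|$ and $|x_K|$; deducing a bound on $|x_{J\setminus K}|$ from the latter pair only gives an error of order $t\sqrt{n}$, which is too large relative to $t\sqrt{|J\setminus K|}$. With that adjustment your argument goes through exactly as in the paper.
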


\begin{proof}
  Let $\pi, \pi_J$ and $\pi_K$ be as in Lemma~\ref{lem:hypergeometric}
  and fix $x \in \{0,1\}^n$ to be some input.
  \begin{eqnarray*}
    \Pr_{\pi }[f(x) \neq f(\pi x)]
    &\leq& 
    \Pr_{\pi_J,\pi_K}[f(x) \neq f(\pi_J \pi_K x)] + \dtv(\calD_{\pi x},\calD_{\pi_J \pi_K x})  \\
    &\leq&
    \Pr_{\pi_K}[f(x) \neq f(\pi_K x)] + \Pr_{\pi_J,\pi_K}[f(\pi_K x) \neq f(\pi_J\pi_K x)] + \dtv(\calD_{\pi x},\calD_{\pi_J \pi_K x})
  \end{eqnarray*}
  By summing over all possible inputs $x$ we have
  \begin{eqnarray*}
    \SymInf_f(J \cup K)
    &=& 
    \Pr_{x,\pi}[f(x) \neq f(\pi x)]  = \frac{1}{2^n}\sum_{x}\Pr_{\pi }[f(x) \neq f(\pi x)]  \\
    &\leq&
    \SymInf_f(J) + \SymInf_f(K) + \frac{1}{2^n}\sum_{x} \dtv(\calD_{\pi x},\calD_{\pi_J \pi_K x}) \ .
  \end{eqnarray*}
  By applying Lemma~\ref{lem:hypergeometric} over each input $x$, it suffices to show that
  \begin{eqnarray}
  \label{eq:distance}
  \frac{1}{2^n}\sum_{x} \dtv(\calD_{\pi x},\calD_{\pi_J \pi_K x}) = 
  \frac{1}{2^n}\sum_{x} \dtv(\calH_{|J \cup K| ,|x_{J \cup K}|,|K \setminus J |},\calH_{|K|, |x_{K}| , |K \setminus J|}) \leq
  c \sqrt{\gamma} \ .
  \end{eqnarray}

  Ideally, we would like to apply Lemma~\ref{lem:dtv} on every input $x$ and get the desired result, however this
  is not possible as some inputs does not satisfy the requirements of the lemma.
  Therefore, we perform a slightly more careful analysis.
  Let us choose $c \geq 2$ and assume $\gamma \leq \tfrac14$ (as otherwise the claim trivially holds).
  Fix $\gamma' = \gamma/(1-\gamma) \leq \tfrac12$ and $t = \tfrac{1}{100\sqrt{\gamma'}}$.
  We first note that regardless of $x$, the required conditions on the size of the sets hold.
  To be exact, $|J \setminus K| \leq \gamma' |J \cup K|$ and $|K \setminus J| \leq \gamma' |J \cup K|$
  since $|J \cup K| \geq (1-\gamma) n$ and $|J \setminus K| \leq |\overline{K}| \leq \gamma n$
  (and similarly $|K \setminus J| \leq \gamma n$).
  
  We say an input $x$ is \emph{good} if it satisfies the other conditions of Lemma~\ref{lem:dtv}.
  That is, both $\left| |x_{J \cup K}| - \frac{|J \cup K|}{2} \right| \leq t \sqrt{|J \cup K|}$
  and $\left| |x_{J \setminus K}| - \frac{|J \setminus K|}{2} \right| \leq t \sqrt{|J \setminus K|}$ hold.
  Otherwise we call such $x$ \emph{bad}.
  From the Chernoff bound and the union bound, the probability that $x$ is bad is at most
  $4\exp(-2t^2) \leq 4 \exp\left(-\frac{1}{5000 \gamma'}\right) \leq c'\gamma$ for some constant $c'$
  (notice that $\gamma' \leq 2\gamma$).
  
  By applying Lemma~\ref{lem:dtv} over the good inputs we get
  $$
  \eqref{eq:distance} \leq \frac{1}{2^n}\sum_{x:bad} 1 + \frac{1}{2^n}\sum_{x:good}c_{\ref{lem:dtv}}(1+t)\gamma
  \leq c'\gamma + c_{\ref{lem:dtv}}(1+t)\gamma \leq c\sqrt{\gamma}
  $$
  for some constant $c$, as required.
\end{proof}

\begin{proof}[Proof of Lemma~\ref{lem:hypergeometric}]
Since both distributions $D_{\pi x}$ and $D_{\pi_J \pi_K x}$ only modify coordinates in $J \cup K$, we can ignore all other coordinates.
Moreover, it is in fact suffices to look only at the number of ones in the coordinates of $K \setminus J$ and $J \cup K$,
which completely determines the distributions.
Let $D_z$ denote the uniform distribution over all elements $y\in \{0,1\}^n$ such that
$|y| = |x|$, $y_{\overline{J \cup K}} = x_{\overline{J \cup K}}$ and $|y_{K \setminus J}| = z$ 
(which also fixes the number of ones in $y_J$).
Notice that this is well defined only for values of $z$ such that
$\max \{ 0, |x_{J \cup K}| - |J| \} \leq z \leq \min \{  |x_{J \cup K}| , |K \setminus J| \} $.

Given this notation, $D_{\pi x}$ can be looked at as 
choosing $z \sim \calH_{|J \cup K|,|x_{J \cup K}|,|K \setminus J|}$ and returning $y \sim D_z$.
This is because we apply a random permutation over all elements of $J \cup K$,
and therefore the number of ones inside $K \setminus J$ is indeed distributed like $z$.
Moreover, the order inside both sets $K \setminus J$ and $J$ is uniform.

The distribution $D_{\pi_J \pi_K x}$ can be looked at as choosing
$z \sim \calH_{|K|,|x_K|, |K \setminus J|}$ and returning $y \sim D_z$.
The number of ones in $K \setminus J$ is determined already after applying $\pi_K$.
It is distributed like $z$ as we care about the choice of 
$|K \setminus J|$ out of the $|K|$ elements, and $|x_K|$ of them are ones (and their order is uniform).
Later, we apply a random permutation $\pi_J$ over all other relevant coordinates,
so the order of elements in $J$ is also uniform.

Since the distributions $D_z$ are disjoint for different values of $z$, this implies that the distance between the two
distributions $D_{\pi x}$ and $D_{\pi_J \pi_K x}$ depends only on the number of ones chosen to be inside $K \setminus J$.
Therefore we have
$$
 \dtv(D_{\pi x}, D_{\pi_J \pi_K x}) = \dtv(\calH_{|J \cup K|,|x_{J \cup K}|,|K \setminus J|}, \calH_{|K|,|x_K|, |K \setminus J|})
$$
as required.
\end{proof}

\begin{proof}[Proof of Lemma~\ref{lem:dtv}]
Our proof uses the connection between hypergeometric distribution and the binomial distribution, which we denote by $\calB_{n,p}$ (for $n$ experiments, each with success probability $p$). By the triangle inequality we know that
  \begin{eqnarray}
    \dtv(\calH_{n,m,k},\calH_{n-n',m-m',k})
    \leq
    \dtv(\calH_{n,m,k},\calB_{k,p}) + \dtv(\calB_{k,p},\calB_{k,p'}) + \dtv(\calB_{k,p'}, \calH_{n-n',m-m',k})  \label{eq:first}
  \end{eqnarray}
  where $p = \frac{m}{n}$ and $p' = \frac{m-m'}{n-n'}$.
  In order to bound the distances we just introduced, we use  the following two lemmas.
\begin{lemma}[Example~1 in~\cite{soon1996binomial}]
\label{lem:h-and-b}
  $\dtv(\calH_{n,m,k},\calB_{k,p}) \leq \frac{k}{n}$  holds for $p = \frac{m}{n}$.
\end{lemma}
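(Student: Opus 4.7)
The plan is to prove this hypergeometric-to-binomial bound via a coupling argument. I would exhibit random variables $H \sim \calH_{n,m,k}$ and $B \sim \calB_{k,p}$ on a common probability space and then apply the standard estimate $\dtv(H,B) \le \Pr[H \neq B]$. Concretely, draw $U_1,\ldots,U_k$ i.i.d.\ uniform on $[n]$ and set $B := \sum_{t=1}^{k}\mathbf{1}[U_t \le m]$, which is immediately $\calB_{k,p}$-distributed. In parallel, define $V_1,\ldots,V_k$ sequentially by $V_1 := U_1$ and, for $t \ge 2$, $V_t := U_t$ when $U_t \notin \{V_1,\ldots,V_{t-1}\}$, otherwise $V_t$ is resampled uniformly from $[n]\setminus\{V_1,\ldots,V_{t-1}\}$ independently of the past. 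A routine induction shows that $(V_1,\ldots,V_k)$ is a uniformly random injection $[k] \to [n]$, so $H := \sum_{t}\mathbf{1}[V_t \le m]$ has the hypergeometric law $\calH_{n,m,k}$.

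With this coupling in hand, $H \neq B$ forces that at some step $t$ the coupled values $U_t$ and $V_t$ fall on opposite sides of the threshold $m$, and in particular that $U_t$ was a collision. A crude union bound then yields $\Pr[H \neq B] \le \sum_{t=2}^{k}(t-1)/n = \binom{k}{2}/n$, which already proves the lemma for $k \le 3$ and gives the right order of magnitude when $k = o(\sqrt{n})$, but it falls short of $k/n$ for larger $k$. To tighten it, I would observe that conditional on a collision at step $t$, the resampled $V_t$ is uniform on the complement of the previously used $t-1$ balls, whose red fraction is correlated with that of $\{V_1,\ldots,V_{t-1}\}$ and hence with the color of $U_t$. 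Tracking the resulting telescoping of the per-step color-mismatch probabilities, or alternatively applying the Stein–Chen exchangeable-pair identity with the pair obtained by re-randomising a single one of the $k$ draws, should produce the claimed bound of $k/n$.

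The main obstacle is shaving the constant from $\binom{k}{2}/n$ down to $k/n$. The coupling above is essentially tight as a bound on the probability that the entire \emph{sequences} $(U_t)$ and $(V_t)$ coincide, but the lemma concerns only the coarser sums $H$ and $B$, and it is this projection that saves the factor of roughly $k/2$. Extracting that saving requires either the careful per-step telescoping sketched above, exploiting that $\E H = \E B$ and that the per-step color differences form a martingale difference whose increments cancel, or invoking the Stein–Chen machinery for sums of (weakly) dependent indicators, which is tailor-made for comparing a slight negative-correlation perturbation of a binomial to the binomial itself and directly outputs a bound of the correct form in $k/n$.
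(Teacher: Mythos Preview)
The paper does not supply its own proof of this lemma; it is quoted as Example~1 of the cited reference, where the bound is obtained via Stein's method for binomial approximation of sums of dependent indicators. So there is no in-paper argument to compare against beyond that citation.

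Your coupling is correct and cleanly delivers $\dtv(\calH_{n,m,k},\calB_{k,p}) \le \binom{k}{2}/n$, but that is where the actual proof ends. The upgrade to $k/n$ is not established, and the coupling route you describe cannot reach it: under your construction, $\Pr[H\neq B]$ is genuinely of order $k^2/n$ when $k=o(\sqrt n)$. A collision at step $t$ occurs with probability $(t-1)/n$, and conditional on a single collision the resampled $V_t$ disagrees in color with $U_t$ with probability bounded away from zero (about $2p(1-p)$), so $\Pr[H\neq B]\approx 2p(1-p)\binom{k}{2}/n$, which already exceeds $k/n$ once $k$ is larger than a small constant (for instance $k\ge 6$ at $p=\tfrac12$). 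Cancellation among multiple color mismatches is a $1/n^2$ effect and does not recover the missing factor of $k$. Your ``telescoping'' and ``martingale'' remarks still pass through the coupling inequality $\dtv\le\Pr[H\neq B]$, and that inequality is simply too lossy here. The alternative you name --- Stein--Chen with an exchangeable pair --- is the correct tool and is essentially what the cited reference does, but you have not carried it out. As written, the proposal proves the weaker $\binom{k}{2}/n$ bound and then defers the sharp constant to the same external machinery the paper already invokes.
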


\begin{lemma}[\cite{adell2006exact}\label{lem:b-and-b}]
  Let $0 < p < 1$ and $0 < \delta < 1-p$.
  Then,
  \begin{eqnarray*}
    \dtv(\calB_{n,p},\calB_{n,p+\delta}) \leq \frac{\sqrt{e}}{2}\frac{\tau_{n,p}(\delta)}{(1-\tau_{n,p}(\delta))^2}
  \end{eqnarray*}
  provided $\tau_{n,p}(\delta) = \delta\sqrt{\frac{n+2}{2p(1-p)}} < 1$.
\end{lemma}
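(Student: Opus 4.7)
The plan is to prove Lemma~\ref{lem:b-and-b} using the chi-squared divergence together with tensorization and the data processing inequality. The crucial observation is that $\calB_{n,p}$ is the pushforward of the product Bernoulli measure $\mathrm{Ber}(p)^{\otimes n}$ under the sum map, so we can leverage the clean product structure and then descend to the one-dimensional binomial.

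First, a direct computation gives $\chi^2(\mathrm{Ber}(p+\delta) \,\|\, \mathrm{Ber}(p)) = \delta^2/(p(1-p))$. Since the chi-squared divergence tensorizes on product measures, $\chi^2(\mathrm{Ber}(p+\delta)^{\otimes n} \,\|\, \mathrm{Ber}(p)^{\otimes n}) = (1 + \delta^2/(p(1-p)))^n - 1$, and the data processing inequality applied to the sum map yields
\begin{eqnarray*}
\chi^2(\calB_{n,p+\delta} \,\|\, \calB_{n,p}) \le \left(1+\tfrac{\delta^2}{p(1-p)}\right)^n - 1.
\end{eqnarray*}
Combined with the standard conversion $\dtv(Q,P) \le \tfrac{1}{2}\sqrt{\chi^2(Q\,\|\,P)}$, this already gives a bound of the right form, namely $\dtv(\calB_{n,p},\calB_{n,p+\delta}) \le \tfrac{1}{2}\sqrt{(1+\delta^2/(p(1-p)))^n - 1}$.

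The remaining step is to rewrite this as $\tfrac{\sqrt{e}}{2}\,\tau/(1-\tau)^2$ with $\tau = \delta\sqrt{(n+2)/(2p(1-p))}$. Expanding via the binomial theorem, $(1+a)^n - 1 = \sum_{k\ge 1}\binom{n}{k}a^k$ with $a = \delta^2/(p(1-p))$, and using an upper bound of the shape $\binom{n}{k} \le (n+2)^k/(2^k k!)$, the series becomes a power series in $\tau^2$ whose tail is controlled by the derivative-of-geometric-series identity, giving a bound of the form $\mathrm{const}\cdot\tau^2/(1-\tau)^2$ inside the square root; taking square roots then yields the claimed expression.

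The main obstacle is extracting the \emph{exact} constants, in particular the shift $n\mapsto n+2$ in the definition of $\tau_{n,p}(\delta)$. The chi-squared/tensorization/data-processing route cleanly gives the correct rate and scaling but is wasteful in the absolute constant, since data processing is generally not tight. To recover the form in the lemma, I would instead write
\begin{eqnarray*}
\calB_{n,p+\delta}(k) - \calB_{n,p}(k) = \int_p^{p+\delta} \frac{k-nr}{r(1-r)}\,\calB_{n,r}(k)\, dr,
\end{eqnarray*}
using the score identity $\partial_r \calB_{n,r}(k) = \tfrac{k-nr}{r(1-r)}\calB_{n,r}(k)$, then take absolute values, sum over $k$, and apply Cauchy--Schwarz in $k$ against $\calB_{n,r}$; the variance $\sum_k (k-nr)^2 \calB_{n,r}(k) = nr(1-r)$ accounts for most of the bound, and a careful accounting of higher-order correction terms along the interpolation (producing the shift $n\mapsto n+2$) matches the generating-function argument used by Adell and Jodr\'a.
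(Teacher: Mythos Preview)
The paper does not prove this lemma at all; it is quoted verbatim from Adell and Jodr\'a~\cite{adell2006exact} and used as a black box inside the proof of Lemma~\ref{lem:dtv}. So there is no in-paper argument to compare your attempt against.

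That said, your proposal has a genuine gap. The chi-squared/tensorization/data-processing route is correct and cleanly yields
\[
\dtv(\calB_{n,p},\calB_{n,p+\delta}) \;\le\; \tfrac12\sqrt{\bigl(1+\tfrac{\delta^2}{p(1-p)}\bigr)^{n} - 1},
\]
which is a valid bound of the right order. But the step where you ``rewrite'' this in the form $\tfrac{\sqrt{e}}{2}\,\tau/(1-\tau)^2$ does not go through: the inequality $\binom{n}{k} \le (n+2)^k/(2^k k!)$ that you invoke is simply false---already for $k=1$ it reads $n \le (n+2)/2$, which fails for every $n\ge 3$. More generally, no repackaging of the chi-squared bound will reproduce the precise constants in the lemma, since the data-processing step is lossy and the specific shift $n\mapsto n+2$ arises from a different mechanism.

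Your second paragraph, via the score identity $\partial_r \calB_{n,r}(k) = \tfrac{k-nr}{r(1-r)}\,\calB_{n,r}(k)$ followed by Cauchy--Schwarz, is indeed the route taken in~\cite{adell2006exact}, but what you have written is only an outline: the ``careful accounting of higher-order correction terms'' that produces both the $(n+2)$ and the factor $(1-\tau)^{-2}$ is essentially the whole content of their proof, and you have not supplied it. For the purposes of this paper the citation suffices; if you actually want to establish the inequality with these exact constants, you would need to carry that computation out in full rather than point at it.
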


  Before using the above lemmas, we analyze some of the parameters.
  First, when $k=0$ the lemma trivially holds and we therefore assume $k\geq 1$.
  Notice that this implies that $n\gamma \geq k \geq 1$.
  The probability $p$ is known to be relatively close to half. To be exact,
  $|p - \tfrac12| \leq t\sqrt{n}/n \leq \tfrac{1}{100\sqrt{n\gamma}} \leq \tfrac{1}{100}$
  and therefore $\tfrac{1}{p(1-p)} < 6$.
  Assume $p \leq p'$ and let $\delta = p' - p$ (the other case can be treated in the same manner).
  We first bound $\delta$ as follows.
  \begin{eqnarray*}
    \delta &=&
    \frac{mn' - nm'}{n(n-n')}
    \leq
    \frac{1}{n(n-n')} \left( \left(\frac{n}{2}+t \sqrt{n}\right) n' - n \left(\frac{n'}{2} - t \sqrt{n'}\right) \right) \\
    &= &
    \frac{t (n \sqrt{n'} + \sqrt{n} n')}{n(n-n')}  
    \leq
    \frac{2t \sqrt{\gamma} n^{3/2}} {(1-\gamma)n^2}
    \leq 
    4 t \sqrt{\frac{\gamma}{n} }\quad (\text{from }\gamma \leq \frac{1}{2})\ .
  \end{eqnarray*}
  Then, $\tau_{k,p}(\delta)$ in Lemma~\ref{lem:b-and-b} can be bounded by
  \begin{eqnarray*}
    \tau_{k,p}(\delta)
    &\leq&
    4t \sqrt{\frac{\gamma}{n}} \sqrt{\frac{k+2}{2p(1-p)}} 
    \leq
    4t \sqrt{\frac{3\gamma (k+2)}{n}} \quad (\text{from } \tfrac{1}{p(1-p)} < 6)\\
    &\leq&
    12t \sqrt{\gamma k/n} \leq 12t \gamma \quad (\text{from } 1 \leq k \leq \gamma n)\ .
  \end{eqnarray*}
  Note that, from the assumption, we have $\tau_{k,p}(\delta) \leq \frac{1}{2}$.
  By Lemmas~\ref{lem:h-and-b} and~\ref{lem:b-and-b},  
  we have 
  \begin{eqnarray*}
    \eqref{eq:first} 
    &\leq &
    \frac{k}{n} + \frac{\sqrt{e}}{2}\frac{\tau_{k,p}(\delta)}{(1-\tau_{k,p}(\delta))^2} + \frac{k}{n - n'}  \\
    &\leq&
    3 \gamma + 2\sqrt{e} \cdot 12 t \gamma \quad (\text{from } \tau_{k,p}(\delta) \leq \frac{1}{2})\\
    &\leq&
    c_{\ref{lem:dtv}}(1 + t) \gamma
  \end{eqnarray*}
  for some universal constant $c_{\ref{lem:dtv}}$.
\end{proof}

\section{Testing partial symmetry}

\subsection{Analysis of \textsc{Find-Asymmetric-Set}}
\label{appendix:fas}

In this section we prove there exists an algorithm \textsc{Find-Asymmetric-Set}, which satisfies Lemma~\ref{lem:properties-of-fas}.

Suppose that we have two inputs $x,y\in \{0,1\}^n$ with $x_{J} = y_{J}, |x| = |y|$ such that $f(x) \neq f(y)$.
Given such inputs, we know there exists some asymmetric variable outside of $J$. In order to efficiently find a set
from a partition $\calI$ which contains such a variable, we will use binary search over the sets. 
First, we construct a refinement $\calJ$ of $\calI$.
Every set of $\calI \setminus \{W\}$ is
partitioned further into parts so that each part has size at most $\lceil |W| / 4 \rceil$.
Let $t = |\calJ \setminus \{W\}|$ be the number of parts in $\calJ$ excluding the workspace.
Notice that the number of parts is at most $t \leq r + 4n / |W| = O(r)$.
Then, we construct a series of inputs $x^0 = x, x^1, \ldots, x^t=y$ by each step permuting only elements from some set $I \in \calJ \setminus\{W\}$ and the workspace $W$ (that is, applying a permutation from $\calS_{I \cup W}$). In each such step, we guarantee that $x^i_{I} = y_{I}$ for one more set $I \in \calJ \setminus \{W\}$, and therefore after (at most) $t$ steps we would reach $y$ (notice that we can choose the last step such that $x^t_W = y_W$ as the Hamming weight of all the inputs in the sequence is identical). 

Using this construction, we can now describe the algorithm \textsc{Find-Asymmetric-Set} as follows.
\begin{algorithm}
  \caption{\textsc{Find-Asymmetric-Set}$(f, \calI, J, W)$}
  \begin{algorithmic}
    \STATE Generate $x \in \{0,1\}^n$ and $\pi \in \calS_{\overline{J}}$ uniformly at random.
    \IF{$f(x) \neq f(\pi x)$}
     \STATE Define $x^0,\ldots,x^t$.
     \STATE Perform binary search on $x=x^0,\ldots,x^t=y$, and find $i$ such that $f(x^{i-1}) \neq f(x^i)$.
     \RETURN the only part $I \in \calI \setminus \{W\}$ such that $x^{i-1}_I \neq x^{i}_I$. 
    \ENDIF
    \RETURN $\emptyset$.
  \end{algorithmic}
\end{algorithm}

\begin{proof}[Proof of Lemma~\ref{lem:properties-of-fas}]
  Since we perform binary search over the sequence $x^0,\ldots,x^t$,
  the query complexity of the algorithm is indeed $O(\log t) = O(\log r)$. 
  Also, it is easy to verify that we only output an empty set or a part in $\calI \setminus \{W\}$ disjoint to $J$ (as $x_J = y_J$).

  Two random inputs $x$ and $y := \pi x$, for $\pi \in \calS_J$, satisfy $f(x) \neq f(y)$ with probability $\SymInf_f(\overline{J})$.
  Thus, it suffices to show that we can always define a sequence of $x^0,\ldots,x^{t}$,
  given that $|W| \geq \tfrac{n}{2r}$.
  In order to see this is always feasible, we consider the sequence after already defining $x^0,\ldots, x^i$,
  showing we can define $x^{i+1}$.
  
  Let $\calJ^+ = \{ I \in \calJ \mid | x^i_I | > | y_I | \}$ and $\calJ^- = \{ I \in \calJ \mid | x^i_I | < | y_I |\}$
  denote the sets which require increasing or decreasing the Hamming weight of $x_W$ respectively,
  when applying a permutation from $\calS_{I \cup W}$ to ensure $x^{i+1}_I = y_I$.
  Notice that we ignore sets $I$ for which $| x^i_I | = | y_I |$, as they do not impact the Hamming weight of $x^i_W$.
  If $| \calJ^+ | > 0$ and $| \calJ^- | > 0$, then since $\max( |x^i_W|, |W| - |x^i_W|) \geq \lceil |W| / 2 \rceil $ and the size of
  every set $I \in \calJ \setminus \{W\}$ is at most $\lceil |W|/4 \rceil $, there must exists a set we can use to define $x^{i+1}$.
  On the other hand, if $| \calJ^+ | = 0$ for example, then we can define $x^{i+1}$ using any set from $\calJ^-$ as
  $| x^i_W | - |y_W | = -\sum_{I \in \calJ \setminus \{W\}} | x^i_I | - | y_I |$
  (recall that $|x| = |x^i| = |y|$).
  
  It remains to show that when $W$ contains no asymmetric variables and we output a part $I \in \calI \setminus \{W\}$,
  $I$ contains an asymmetric variable.
  Suppose that the output $I$ is the part which was modified between $x^{i-1}$ and $x^{i}$.
  Then, since $f(x^{i-1}) \neq f(x^i), |x^{i-1}| = |x^i|$,
  and $x^{i-1}$ and $x^i$ differ only on $I \cup W$,
  an asymmetric variable exists in $I \cup W$ and we know it is not in $W$.
\end{proof}

\subsection{Proof of Lemma~\ref{lem:psf-main}}\label{appendix:psf-main-proof}
We first note that when the number of parts $r$ is bigger then $n$, we simply partition into the $n$ single-element sets
and the lemma trivially holds.
For $0 \le t \le 1$, let $\calF_t = \{J \subseteq [n] : \SymInf_f(\overline{J}) < t\eps,\ |J| \leq 5kn/r \}$ be the family of all sets which are not too big and whose complement has symmetric influence of at most $t\eps$. (Notice that with high probability, the union of any $k$ sets in the partition would have size smaller than $5kn/r$, and therefore we assume this is the case from this point on.) Our first observation is that for small enough values of $t$, $\calF_{t}$ is a $(k+1)$-intersecting family.
Indeed, for any sets $J, K \in \calF_{1/3}$,
$$
\SymInf_f(\overline{J \cap K}) = \SymInf_f(\overline{J} \cup \overline{K}) \leq \SymInf_f(\overline{J}) + \SymInf_f(\overline{K}) + c\sqrt{5k/r} < 2\eps / 3 + \eps / 9 < \eps\ .
$$
Since $f$ is $\eps$-far from $(n-k)$-symmetric, every set $S \subseteq [n]$ of size $|S| \leq k$ satisfies $\SymInf_f(\overline{S}) \geq \eps$.  So $|J \cap K| > k$.

We consider two cases separately: when $\calF_{1/3}$ contains a set of size less than $2k$; and when it does not. The first case is identical to the proof of Lemma~\ref{lem:junta-main} and hence we do not elaborate on it.

In the second case, which also resembles the proof of Lemma~\ref{lem:junta-main}, 
we claim that $\calF_{1/9}$ is a $2k$-intersecting family. If this was not the case, we could find sets $J, K \in \calF_{1/9}$ such that $|J \cap K| < 2k$ and $\SymInf_f(\overline{J \cap K}) \le \SymInf_f(\overline{J}) + \SymInf_f(\overline{K}) + \eps/9 < \eps/3$, contradicting our assumption.

Let $J \subseteq [n]$ be the union of $k$ parts in $\mathcal{I}$.  Since
$\mathcal{I}$ is a random partition, $J$ is a random subset obtained by
including each element of $[n]$ in $J$ independently with probability $p = k/r
< \frac1{2k+1}$. To bound the probability that $J$ contains some element from $\calF_{1/ 9}$, we define $\calF'_{ 1/ 9}$ to be all the sets that contain a member from $\calF_{1 /9}$. Since $\calF'_{1 /9}$ is also a $2k$-intersecting family, by Theorem~\ref{thm:DS}, for every such $J$ of size at most $5kn/r$, $\Pr[ \SymInf_f(\overline{J}) < \tfrac \eps 9] = \Pr[ J \in \calF_{1/9} ] \leq \mu_{k/r}(\calF'_{1/9}) \le \left(  k/ r \right)^{2k}$.
Applying the union bound over all possible choices for $k$ parts, $f$ will not satisfy the condition of the lemma with probability at most ${r \choose k} \left( \frac k r \right)^{2k} = O(k^{-k})$, which completes the proof of the lemma.

\section{Isomorphism testing and sampling partially symmetric functions}

\subsection{Properties of the sampling distribution}
\label{appendix:dist-diw}

We start this section with the following observation. 
When the number of parts $r$ reaches $n$ (or alternatively when $k = \Omega(\sqrt{n})$),
we consider the partition of $[n]$ into the $n$ single-element sets. Notice that when this is the partition,
then in fact $\calDIW$ is identical to $\calD^*_{k,n}$, making the following proposition trivial.
Therefore, in the proof we assume that $r < n$ and $k = O(\sqrt{n})$.

\begin{proof}[Proof of Proposition~\ref{prop:samples}]
We start with the first part of the proposition, showing $x$ is almost uniform.
Consider the following procedure to generate a random $\calI, W$ and $x$. We draw a random Hamming weight $w \sim \calB_{n,1/2}$ and define $x'$ to be the input consisting of $w$ ones followed by $n-w$ zeros. We choose a random partition $\calI'$ of $[n]$ into $r$ \emph{consecutive} parts $I_1,\ldots, I_r$ (i.e., $I_1 = \{1,2,  \ldots, |I_1| \}$ and $I_r = \{n - |I_r| + 1, \ldots, n\})$ according to the typical distribution of sizes in a random partition. Let the workspace $W'$ be the only part which contains the coordinate $w$ (or $I_1$ if $w=0$). We now apply a random permutation over $x'$, $\calI'$ and $W'$ to get $x$, $\calI$ and $W$.

It is clear the above procedure outputs a uniform $x$ as we applied a random permutation over $x'$, which had a binomial Hamming weight. The choice of $\calI$ was also done at random, considering the applied permutation over $\calI'$. The only difference is then in the choice of the workspace $W$, which can only be reflected in its size. However, when $r = o(\sqrt{n})$ we will choose the middle part as the workspace with probability $1-o(1)$, regardless of its size. In the remaining cases, since there are $n/r = \Omega(\sqrt{n})$ parts, the possible parts to be chosen as workspace are a small fraction among all parts, and therefore $W$ would be $o(1)$-close to being a random part.

\bigskip
Proving the second property of the proposition, we also consider two cases.
When $r = o(\sqrt{n})$, with probability $1-o(1)$, the workspace would have size $\omega(\sqrt{n})$ and
also $w = n/2 + O(\sqrt{n})$. In such a case, the $r-1$ parts (excluding the workspace) would be half zeros and half ones,
and the marginal distribution over the number of ones in $J$ would be $\calH_{r-1,(r-1)/2,k}$
(assuming the elements of $J$ are separated by $\calI$, which happens with probability $1-o(1)$).
By Lemma~\ref{lem:h-and-b}, the distance between this distribution and $\calB_{k,1/2}$ is bounded by $k/r < c/k$
for our choice of $0 < c < 1$.
Since there is no restriction on the ordering of the sets, this is also the distance from uniform over $\{0,1\}^k$ as required.

In the remaining case where $r = \Omega(\sqrt{n})$, we can use the same arguments and also apply Lemma~\ref{lem:b-and-b}
with the distributions $\calB_{k, 1/2}$ and $\calB_{k, 1/2+\delta}$ for $\delta = O(1/\sqrt{n})$, implying the distance
between these two distributions is at most $o(1)$. Combining this with the distance to $\calH_{r-1,(r-1)(1/2+\delta),k}$
we get again a total distance of $k/r + o(1) < c/k$ for our choice of $0 < c < 1$.
\end{proof}

\subsection{Analysis of \textsc{Partially-Symmetric-Isomorphism-Test}}
\label{appendix:iso-psf-proof}

The analysis of the algorithm is based on the fact that functions which passes the \textsc{Partially-Symmetric-Test}
satisfy some conditions, and in particularly are closed to being partially symmetric. We therefore start with the following lemma.

\begin{lemma}\label{lem:psf-close}
Let $g$ be a function $\eps$-close to being $(n-k)$-symmetric which passed the
\textsc{Partially-Symmetric-Test}$(g, k, \eps)$.
In addition, let $\calI, W$ and $J$ be the partition, workspace and identified parts used by the algorithm.
With probability at least $9/10$, there exists a function $h$ which satisfies the following properties.
\begin{itemize}
\setlength{\itemsep}{0pt}
\item $h$ is $4\eps$-close to $g$, and
\item $h$ is $(n-k)$-symmetric whose asymmetric variables are contained in $J$ and separated by $\calI$.
\end{itemize}
\end{lemma}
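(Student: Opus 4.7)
The plan is to first use the fact that the algorithm accepts to bound $\SymInf_g(\overline{J})$, and then construct $h$ from a nearby $(n-k)$-symmetric function $g'$ by symmetrizing over $\overline{J}$. Fix any $(n-k)$-symmetric $g'$ with $\dist(g,g')\le \eps$, and let $A = A(g')$ be its set of at most $k$ asymmetric variables. A union bound over the random partition $\calI$ into $r=\Theta(k^2/\eps^2)$ parts shows that, except with probability $O(k^2/r) \le 1/30$, the variables of $A$ all lie in distinct parts of $\calI$ and none lies in the workspace $W$; call this event $E_1$.

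The first main step is to show that, conditioned on the test accepting, $\SymInf_g(\overline{J}) \le \eps$ except with probability at most $1/30$. Let $J_0 \subseteq J_1 \subseteq \cdots \subseteq J^*$ be the intermediate values of $J$ during the algorithm's execution. By Lemma~\ref{lem:syminf-monotonicity}, if $\SymInf_g(\overline{J^*}) > \eps$, then $\SymInf_g(\overline{J_i}) > \eps$ for every $i$; by Lemma~\ref{lem:properties-of-fas}, each of the $T = \Theta(k/\eps)$ calls to \textsc{Find-Asymmetric-Set} then returns a fresh part with probability at least $\eps$. A standard Chernoff bound applied to the expected number of returns, which is at least $T\eps = \Theta(k)$, shows that more than $k$ new parts are added, forcing rejection, except with probability at most $1/30$.

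Conditioned on both $E_1$ and $\SymInf_g(\overline{J})\le \eps$, I take $h := g'_{\overline{J}}$, the closest $\overline{J}$-symmetric function to $g'$. By Lemma~\ref{lem:distance-to-psf}, $\dist(g',h) \le \SymInf_{g'}(\overline{J})$, and the elementary inequality $|\SymInf_g(S) - \SymInf_{g'}(S)| \le 2 \dist(g,g')$, which follows from a union bound over the events $\{g(x)\neq g'(x)\}$ and $\{g(\pi x)\neq g'(\pi x)\}$ (using that $\pi x$ is also uniform when $x$ is), yields $\SymInf_{g'}(\overline{J}) \le \eps + 2\eps = 3\eps$. The triangle inequality then gives $\dist(g,h) \le \dist(g,g') + \dist(g',h) \le 4\eps$, as required.

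It remains to verify the symmetry structure of $h$. By construction $h$ is $\overline{J}$-symmetric, so its asymmetric variables lie in $J$. Since $g'$ is already symmetric in every pair of variables outside $A$, symmetrizing $g'$ layer-by-layer preserves these symmetries: for any $u, v \in J\setminus A$, the swap $u \leftrightarrow v$ is a bijection of each layer (the set of inputs with fixed $x_J$ and fixed $|x_{\overline{J}}|$) that preserves $g'$-values, so the two corresponding layer-majorities defining $h$ coincide. Hence the asymmetric variables of $h$ lie in $A\cap J$, and under $E_1$ this is a set of at most $k$ variables in distinct parts of $\calI$, so $h$ is $(n-k)$-symmetric with its asymmetric variables inside $J$ and separated by $\calI$. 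The main obstacle is this last structural step: symmetrizing $g$ directly over $\overline{J}$ would still give a function within distance $\SymInf_g(\overline{J}) \le \eps$ of $g$, but could create spurious asymmetric variables, possibly several per part, coming from the $\eps$-disagreement between $g$ and $g'$; routing the construction through $g'$ instead is what allows the separation, and the bound $|A\cap J|\le k$, to be deduced cleanly from the random partition event $E_1$.
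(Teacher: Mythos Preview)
Your distance bound is fine, and the robustness inequality $|\SymInf_g(S)-\SymInf_{g'}(S)|\le 2\dist(g,g')$ is a clean substitute for part of the paper's argument. The gap is in the structural step.

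You show correctly that $h=g'_{\overline{J}}$ is $\overline{J}$-symmetric and invariant under every transposition $(u\ v)$ with $u,v\in J\setminus A$. From this you conclude that the asymmetric variables of $h$ lie in $A\cap J$, i.e.\ that $h$ is $\overline{A\cap J}$-symmetric. That inference is invalid: invariance under $\calS_{\overline{J}}$ and under $\calS_{J\setminus A}$ only gives invariance under $\calS_{\overline{J}}\times\calS_{J\setminus A}$, not under $\calS_{\overline{J}\cup(J\setminus A)}$. Concretely, your $h$ depends on the input only through $(y_{A\cap J},\,|y_{J\setminus A}|,\,|y_{\overline{J}}|)$, and when $A\not\subseteq J$ it can genuinely depend on $|y_{J\setminus A}|$ and $|y_{\overline{J}}|$ separately rather than on their sum. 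For instance, take $k=1$, $A=\{1\}$, $1\notin J$, and $g'(x)=G(x_1,|x_{[n]\setminus\{1\}}|)$ where $G(0,s)$ and $G(1,s-1)$ disagree only at $s=n/2$. Then $\SymInf_{g'}(\overline{J})=O(1/\sqrt n)$ is tiny, yet on the slice $|y|=n/2$ the majority defining $h$ flips according to whether $|y_{\overline{J}}|<|\overline{J}|/2$, so $h$ is not $(n-1)$-symmetric. Since $|J|\approx kn/r\gg k$, your $h$ being merely $\overline{J}$-symmetric is far from enough.

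The paper sidesteps this by taking $h$ to be the closest $\overline{U}$-symmetric function to $g$, where $U=A\cap J$; then the structural conclusion is immediate and the work goes into the distance bound. That bound requires controlling $\SymInf_g(\overline{U})=\SymInf_g(\overline{A}\cup\overline{J})$ from $\SymInf_g(\overline{A})$ and $\SymInf_g(\overline{J})$, which is exactly where Lemma~\ref{lem:syminf-weak-subadditivity} (weak sub-additivity) is used. Your route through $g'$ and the robustness inequality handles $\SymInf$ of $\overline{J}$ nicely, but to get an $h$ with at most $k$ asymmetric variables you still need to pass from $\overline{J}$ to $\overline{A\cap J}$, and that step appears to need sub-additivity (or an argument that $A\subseteq J$ with high probability, which is not established).
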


\begin{proof}
Let $g^*$ be the $(n-k)$-symmetric function closest to $g$ (which can be $f$ itself, up-to some isomorphism) and $R$ be the set of (at most) $k$ asymmetric variables of $g^*$. By Lemma~\ref{lem:distance-to-psf} and our assumption over $g$,
$$
\SymInf_g(\overline{R}) \leq 2 \cdot \dist(g, g^*) \leq 2\eps \ .
$$
Notice however that $R$ is not necessarily contained in $J$ and therefore $g^*$ is not a good enough candidate for $h$.
Let $U = R \cap J$ be the intersection of the asymmetric variables of $g^*$ and the sets identified by the algorithm.
In order to show that $g$ is also close to being $\overline{U}$-symmetric, we bound $\SymInf_g(\overline{U})$
using Lemma~\ref{lem:syminf-weak-subadditivity} with the sets $\overline{R}$ and $\overline{J}$.
Notice that since $|R| \leq k$ and $|J| \leq 2kn/r \leq \eps^2n/c'$ for our choice of $c'$, 
we can bound the error term (in the notation of Lemma~\ref{lem:syminf-weak-subadditivity}) by
$c\sqrt{\gamma} \leq c \sqrt{\eps^2/c'} \leq \eps$. We therefore have
\begin{eqnarray*}
  \SymInf_g(\overline{U}) \leq \SymInf_g(\overline{R}) + \SymInf_g(\overline{J}) + \eps 
  \leq
  2\eps + \eps + \eps = 4\eps
\end{eqnarray*}
where we know $\SymInf_g(\overline{J}) \leq \eps$ with probability at least $19/20$ as the algorithm did not reject.

By applying Lemma~\ref{lem:distance-to-psf} again, we know there exists a $\overline{U}$-symmetric function $h$, whose distance to $g$ is bounded by $\dist(g, h) \leq 4\eps$. Moreover, with probability at least $19/20$, all its asymmetric variables are completely separated by the partition $\calI$ (and they were all identified as part of $J$).
\end{proof}

Given Lemma~\ref{lem:psf-close}, we are now ready to analyze \textsc{Partially-Symmetric-Isomorphism-Test}.
\begin{proof}[Proof of Theorem~\ref{thm:psf-iso-test}]
Before analyzing the algorithm we just described, we consider the case where $k > n/10$.
Since Theorem~\ref{thm:psf-test} does not hold for such $k$'s, we apply the basic algorithm of
$O(n\log n / \eps)$ random queries, which is applicable testing isomorphism of any given function
(since there are $n!$ possible isomorphisms, the random queries will rule out all of them with good probability,
assuming we should reject).
Since $k = \Omega(n)$, the complexity of this algorithm fits the statement of our theorem.

We start by analyzing the query complexity of the algorithm. The step of \textsc{Partially-Symmetric-Test} performs
$O(\tfrac{k}{\eps}\log \tfrac{k}{\eps})$ queries, and therefore the majority of the queries are performed at the sampling stage,
resulting in $O(k\log k/\eps^2)$ queries as required. In order to prove the correctness of the algorithm, 
we consider the following cases.
\begin{itemize}
\setlength{\itemsep}{0pt}
\item $g$ is $\eps$-far from being isomorphic to $f$ and $\eps/1000$-far from being $(n-k)$-symmetric.
\item $g$ is $\eps$-far from being isomorphic to $f$ but $\eps/1000$-close to being $(n-k)$-symmetric.
\item $g$ is isomorphic to $f$.
\end{itemize}
In the first case, with probability at least $9/10$, \textsc{Partially-Symmetric-Test} will reject and so will we, as required. 
We assume from this point on that \textsc{Partially-Symmetric-Test} did not reject, as it will only reject $g$ which is
isomorphic to $f$ with probability at most $1/10$, and that we are not in the first case.
Notice that these cases match the conditions of Lemma~\ref{lem:psf-close},
and therefore from this point onward we assume there exists an $h$ satisfying the lemma's properties
(remembering we applied the algorithm with $\eps/1000$).

In order to bound the distance between $h$ and $g$ in our samples, we use Proposition~\ref{prop:samples}, indicating
$$
\Pr_{\calI, W \in \calI, x \sim \calDIW}[g(x) \neq h(x)] = \dist(g, h) + o(1/n) \ .
$$
By Markov's inequality, with probability at least $9/10$, the partition $\calI$ and the workspace $W$ satisfy
$$
\Pr_{x \sim \calDIW}[g(x) \neq h(x)] \leq 10 \cdot \dist(g, h) + o(1/n) \leq 10 \cdot 4\eps/1000 + o(1/n) < \eps/20 \ .
$$

By Proposition~\ref{prop:samples}, if we were to sample $h$ according to $\calDIW$, it should be $\eps/20$-close to sampling its core (assuming the partition size is large enough). Combined with the distance between $g$ and $h$ in our samples, we expect our samples to be $\eps/20 + \eps/20 = \eps/10$ close to sampling $h$'s core.

The last part of the proof is showing that there would be an almost consistent isomorphism of $f$ only when $g$ is isomorphic to $f$.
Notice however that we care only for isomorphisms which map the asymmetric variables of $f$ to the $k$ sets of $J$.
Therefore, the number of different isomorphisms we need to consider is $k!$.

Assume we are in the second case and $g$ is $\eps$-far from being isomorphic to $f$.
Let $f_{\pi}$ be some isomorphism of $f$. By our assumptions and Lemma~\ref{lem:psf-close},
$$
\dist(f_{\pi}, h) \geq \dist(f_{\pi}, g) - \dist(g,h) \geq \eps - \eps/250 \ .
$$
Each sample we perform would be inconsistent with $f_{\pi}$ with probability at least $\eps - \eps/250 - \eps/10 > 8\eps/9$.
By the Chernoff bounds and the union bound, if we would perform $q = O(k\log k / \eps^2)$ queries, we would rule
out all $k!$ possible isomorphisms with probability at least $9/10$ and reject the function as required.

On the other hand, if $g$ is isomorphic to $f$, then we know there exists with probability at least $9/10$
some isomorphism $f_{\pi}$ which maps the asymmetric variables of $f$ into the sets of $J$, such that
$$
\dist(f_{\pi}, h) \leq \dist(f_{\pi}, g) + \dist(g,h) \leq \eps/500 + \eps/250 \ .
$$
For this isomorphism, with high probability much more than $(1-\eps/2)$-fraction of the queries would be consistent
and we would therefore accept $g$ as we should.
\end{proof}

\subsection{Efficient sampler for partially symmetric functions}
\label{appendix:sampler}

We first provide the algorithm for efficiently generating a $\delta$-sampler for partially symmetric functions. The algorithm perform its preprocessing by calling \textsc{Partially-Symmetric-Test}. Given the output of the algorithm, we query the function once for each call to the sampler, according to $\calDIW$, and return the result.

\begin{algorithm}
  \caption{\textsc{Partially-Symmetric-Sampler}$(f, k, \delta, \eta)$}
  \begin{algorithmic}[1]
    \STATE Perform \textsc{Partially-Symmetric-Test}$(f, k, \eta \delta)$.
    \STATE Let $\calI$ and $W \in \calI$ be the partition and workspace used by the algorithm.
    \STATE Let $J$ be the union of $k$ parts in $\calI \setminus \{W\}$ that were identified by the algorithm.
    \STATE Return the following sampler:
    \STATE \ \ \ Choose a random $y \sim \calDIW$
    \STATE \ \ \ Let $x \in \{0,1\}^k$ be the value assigned to the parts in $J$
    \STATE \ \ \ Yield the triplet $(x, |y| - |x|, f(y))$
  \end{algorithmic}
\end{algorithm}

\begin{proof}[Proof of Theorem~\ref{thm:sampler}]
The algorithm for generating the sampler is described by \textsc{Partially-Symmetric-Sampler}, which performs $O(\tfrac {k}{\eta \delta} \log \tfrac{k}{\eta \delta})$ preprocessing queries to the function. What remains to be proved is that indeed with good probability, the algorithm returns a valid sampler.

Let $h$ be the function defined in the analysis of Theorem~\ref{thm:psf-iso-test}, which satisfies the conditions of
Lemma~\ref{lem:psf-close}. Recall that its asymmetric variables were separated by $\calI$ and appear in $J$.
Following this analysis and that of \textsc{Partially-Symmetric-Test},
one can see that with probability at least $1-\eta$ we would not reject $f$
when calling \textsc{Partially-Symmetric-Test}. Moreover, the samples would be $\delta/2$-close to sampling
the core of $h$, which is by itself $\delta/2$-close to $f$.
Therefore, overall our samples would be $\delta$-close to sampling the core of $f$.

The last part in completing the proof of the theorem is showing that we sample the core with distribution $\delta$-close
to $\calD^*_{k,n}$. By Proposition~\ref{prop:samples}, the total variation distance between sampling the core
according to $\calD^*_{k,n}$ and sampling it according to $\calDIW$ is at most $c/k$ for our choice of $0 < c < 1$,
which we can choose it to be at most $\delta$.
\end{proof}

Notice that if the function $f$ is not $(n-k)$-symmetric but still very close (say $(k/\eta \delta)^2$-close), applying the same
algorithm will provide a good sampler for an $(n-k)$-symmetric function $f'$ close to $f$. The main reason is that most likely, we will
not query any location of the function where it does not agree with $f'$.

\end{document}